\else\usepackage{pdfsync}\fi 
\newcounter{prgline}
\newcommand{\pl}{\theprgline\addtocounter{prgline}{1}}
\newcommand{\Var}[1]{\mathrm{Var}[#1]}
\newcommand{\StdDev}[1]{\sqrt{\Var{#1}}}
\newcommand{\Nhat}{\hat N}
\newcommand{\Hhat}{\hat H}
\newcommand{\hhat}{\hat h}
\renewcommand{\epsilon}{\varepsilon}
\newcounter{noqed}
\newcommand{\qed}{ \ifmmode\mbox{ }\fi\rule[-.05em]{.3em}{.7em}\setcounter{noqed}{0}}
\newenvironment{proof}[1][{}]{\noindent{\bf Proof#1. }\setcounter{noqed}{1}}{\ifnum\value{noqed}=1\qed\fi\par\medskip}
\def\..{\,\mathpunct{\ldotp\ldotp}} % Middle stuff for intervals. Usage: \..
\newcommand{\lst}[2]{${#1}_0$,~${#1}_1$, $\dots\,$,~${#1}_{#2-1}$}
\newcommand{\IF}{\text{\textbf{if}}\xspace}
\newcommand{\BEGIN}{\text{\textbf{begin}}\xspace}
\newcommand{\FOREACH}{\text{\textbf{foreach}}\xspace}
\newcommand{\END}{\text{\textbf{end}}\xspace}
\newcommand{\DO}{\text{\textbf{do}}\xspace}
\newcommand{\RETURN}{\text{\textbf{return}}\xspace}
\newcommand{\UNTIL}{\text{\textbf{until}}\xspace}
\newcommand{\BREAK}{\text{\textbf{break}}\xspace}
\newcommand{\FUNCTION}{\text{\textbf{function}}\xspace}
\newcommand{\COMMENT}{\xspace}
\newcommand{\DD}{\mathscr{D}}
\newtheorem{lemma}{Lemma}
\newtheorem{theorem}{Theorem}
\newtheorem{corollary}{Corollary}
\newcommand{\shiftr}{\mathbin\gg}
\newcommand{\shiftl}{\mathbin\ll}
\newcommand{\band}{\mathbin\&}
\newcommand{\bor}{\mathbin|}
\newcommand{\bxor}{\oplus}
\newcommand{\bnot}[1]{\overline{#1}}
\title{HyperANF: Approximating the Neighbourhood Function of Very Large Graphs
on a Budget} 
\author{Paolo Boldi\quad Marco Rosa \quad Sebastiano
Vigna\\Dipartimento di Scienze dell'Informazione,Universit\`a degli Studi di
Milano, Italy}
\begin{document}
\bibliographystyle{alpha}
\maketitle

\begin{abstract}
The \emph{neighbourhood function} $N_G(t)$ of a graph $G$ gives, for each
$t\in\mathbf N$, the number of pairs of nodes $\langle x, y\rangle$ such that
$y$ is reachable from $x$ in less that $t$ hops. The neighbourhood function
provides a wealth of information about the graph~\cite{PGFANF} (e.g., it easily
allows one to compute its diameter), but it is
very expensive to compute it exactly. Recently, the ANF
algorithm~\cite{PGFANF} (approximate neighbourhood function) has been
proposed with the purpose of approximating $N_G(t)$ on large graphs. We
describe a breakthrough improvement over ANF in terms of
speed and scalability. Our algorithm, called HyperANF, uses the new HyperLogLog
counters~\cite{FFGH} and combines them efficiently through \emph{broadword
programming}~\cite{KnuACPBTT}; our implementation uses
\emph{task decomposition} to exploit multi-core
parallelism. With HyperANF, for the first time we can compute in a few hours
the neighbourhood function of graphs with billions of nodes with a small error and good confidence using a
standard workstation. 

Then, we turn to the study of the distribution of
\emph{distances} between reachable nodes (that can be efficiently
approximated by means of HyperANF), and discover the surprising fact that its
\emph{index of dispersion} provides a clear-cut characterisation of proper social networks vs.~web graphs. We thus propose the \emph{spid}
(Shortest-Paths Index of Dispersion) of a graph as a new, informative statistics
that is able to discriminate between the above two types of
graphs. We believe this is the first proposal of a significant
new non-local structural index for complex networks whose computation is
highly scalable.
\end{abstract}

% \category{H.2.8}{Database Applications}{Data mining}
% \category{G.3}{Pro\-ba\-bility and Statistics}{Probabilistic algorithms}
% 
% \terms{Graph analysis; social networks}
% \keywords{Neighbourhood function; probabilistic counters; shortest paths;
% effective diameter.}
% 

\section{Introduction}

The \emph{neighbourhood function} $N_G(t)$ of a graph returns for each
$t\in\mathbf N$ the number of pairs of nodes $\langle x, y\rangle$ such that
$y$ is reachable from $x$ in less that $t$ steps. It provides data about how
fast the ``average ball'' around each node expands. From the neighbourhood
function, several interesting features of a graph can be estimated, and in this
paper we are in particular interested in the \emph{effective diameter}, a
measure of the ``typical'' distance between nodes.

Palmer, Gibbons and Faloutsos~\cite{PGFANF} proposed an
algorithm to \emph{approximate} the neighbourhood function (see their
paper for a review of previous attempts at approximate evaluation);
the authors distribute an associated
tool, \texttt{snap}, which can approximate the neighbourhood function of
medium-sized graphs.
The algorithm keeps track of the number of nodes reachable from each node
using \emph{Flajolet--Martin counters}, a kind of \emph{sketch} that
makes it possible to compute the number of distinct elements of a stream
in very little space. A key observation was that counters associated to
different streams can be quickly combined into a single counter associated to
the concatenation of the original streams. 

In this paper, we describe HyperANF---a breakthrough improvement over ANF in
terms of speed and scalability. HyperANF uses the new HyperLogLog counters~\cite{FFGH},
and combines them efficiently by means of \emph{broadword
programming}~\cite{KnuACPBTT}. Each counter is made by a number of \emph{registers}, and the number of registers
depends only on the required precision. 
The size of each register is \emph{doubly logarithmic} in the number of nodes of
the graph, so HyperANF, for a fixed precision, scales almost linearly in memory (i.e., $O(n\log \log n)$).
By contrast, ANF memory requirement is $O(n\log n)$.	

Using HyperANF, for the first time we can compute in a few hours the
neighbourhood function of graphs with more than one billion nodes with a small error and good
confidence using a standard workstation with 128\,GB of RAM.  Our algorithms are implement in a tool distributed
as free software within the WebGraph
framework.\footnote{See~\cite{BoVWFI}. \texttt{http://webgraph.dsi.unimi.it/}.}

Armed with our tool, we study several datasets, spanning from small social
networks to very large web graphs. We isolate a statistically defined feature,
the \emph{index of dispersion of the distance distribution}, and show that
it is able to tell ``proper" social networks from web graphs in a natural way.

\section{Related work}

HyperANF is an evolution of ANF~\cite{PGFANF}, which is implemented by the tool
\texttt{snap}. We will give some timing comparison with \texttt{snap}, but we
can only do it for relatively small networks, as the large memory footprint of
\texttt{snap} precludes application to large graphs.

Recently, a MapReduce-based distributed implementation of ANF called
HADI~\cite{KTAHMRLG} has been presented. HADI runs on one of the fifty largest
supercomputers---the Hadoop cluster M45. The only published data about HADI's
performance is the computation of the neighbourhood function of a Kronecker
graph with 2 billion links, which required half an hour using 90 machines.
HyperANF can compute the same function in \emph{less than fifteen minutes on a
laptop}.
%, so it is difficult to
%quantify numerically the magnitude of the improvement over HADI.

The rather complete survey of related literature in~\cite{KTAHMRLG} shows that
essentially no data mining tool was able before ANF to approximate the
neighbourhood function of very large graphs reliably. A remarkable exception is
Cohen's work~\cite{CohSEFATCR}, which provides strong theoretical guarantees but
experimentally turns out to be not as scalable as the ANF approach; it is worth
noting, though, that one of the proposed applications of~\cite{CohSEFATCR} (\emph{On-line estimation
of weights of growing sets}) is structurally identical to ANF.

All other results published before ANF relied on a small number of breadth-first
visits on uniformly sampled nodes---a process that has no provable statistical accuracy or
precision. Thus, in the rest of the paper we will compare experimental data with
\texttt{snap} and with the published data about HADI.

\section{HyperANF}

In this section, we present the HyperANF algorithm for computing an
approximation of the neighbourhood function of a graph; we start by recalling
from~\cite{FFGH} the notion of HyperLogLog counter upon which our algorithm
relies. We then describe the algorithm, discuss how it can be implemented to be
run quickly using broadword programming and task decomposition, and give
results about its memory requirements and precision.

\subsection{HyperLogLog counters}
\label{sec:hyper}

\emph{HyperLogLog counters}, as described in~\cite{FFGH} (which is based
on~\cite{DuFLCLC}), are used to count approximately the number of distinct
elements in a stream. For the purposes of the present paper, we need to recall briefly their behaviour. 
Essentially, these probabilistic counters are a sort of \emph{approximate set
representation} to which, however, we are only allowed to pose questions about
the (approximate) size of the set.\footnote{We remark that in principle
$O(\log n)$ bits are necessary to estimate the number of unique elements in a
stream~\cite{AMSSCAFM}. HyperLogLog is a practical counter that starts from the
assumption that a hash function can be used to turn a stream into an
\emph{idealised multiset} (see~\cite{FFGH}).}

Let $\DD$ be a fixed domain and $h: \DD \to 2^\infty$ be a hash function mapping
each element of $\DD$ into an infinite binary sequence. The function is fixed with the
only assumption that ``bits of hashed values are assumed to be independent and
to have each probability $\frac{1}{2}$ of occurring''~\cite{FFGH}. 

For a given $x \in 2^\infty$, let $h_t(x)$ denote the sequence made by the
leftmost $t$ bits of $h(x)$, and $h^t(x)$ be the sequence of remaining bits of
$x$; $h_t$ is identified with its corresponding integer
value in the range $\{\,0,1,\dots,2^t-1\,\}$. Moreover,
given a binary sequence $w$, we let $\rho^+(w)$ be the number of leading zeroes
in $w$ plus one\footnote{We remark that in the original HyperLogLog papers
$\rho$ is used to denote $\rho^+$, but $\rho$ is a somewhat standard notation
for the ruler function~\cite{KnuACPBTT}.} (e.g., $\rho^+(00101)=3$). Unless
otherwise specified, all logarithms are in base 2.

% \begin{algorithm}
% \begin{tabbing}
% \setcounter{prgline}{0}
% \hspace{0.5cm} \= \hspace{0.3cm} \= \hspace{0.3cm} \= \hspace{0.3cm} \=
% \hspace{0.3cm} \= \hspace{0.3cm} \=\kill\\
% \pl\>$h: \DD \to 2^\infty$, a hash function from the domain of items\\
% \pl\>$M[-]$, an array of $m=2^b$ registers\\
% \pl\>\>(indexed from 0) and set to $-\infty$\\ \pl\>
% \FOREACH item $x$ seen in the stream \BEGIN\\ 
% \pl\>\>$i\leftarrow h_b(x)$\\
% \pl\>\>$M[i]\leftarrow\max \bigl\{M[i], \rho^+\bigl(h^b(x)\bigr)\bigr\}$\\
% \pl\>\END;\\
% \pl\>$Z\leftarrow \left(\sum_{j=0}^{m-1} 2^{-M[j]}\right)^{-1}$\\
% \pl\>return $E=\alpha_m m^2 Z$\\
% \end{tabbing}
% \caption{\label{algo:Hyperloglog}The Hyperloglog counter, as described
% in~\cite{FFGH}: it allows one to count (approximately) the number of distinct
% elements in a stream. $\alpha_m$ is a constant whose value depends on $m$ and
% is provided in~\cite{FxFGH}.}
% \end{algorithm}
% 
\begin{algorithm}
\begin{tabbing}
\setcounter{prgline}{0}
\hspace{0.5cm} \= \hspace{0.3cm} \= \hspace{0.3cm} \= \hspace{0.3cm} \=
\hspace{0.3cm} \= \hspace{0.3cm} \=\kill\\
\pl\>$h: \DD \to 2^\infty$, a hash function from the domain of items\\
\pl\>$M[-]$ the counter, an array of $m=2^b$ registers\\
\pl\>\>(indexed from 0) and set to $-\infty$\\ 
\pl\>\\
\pl\>\FUNCTION $\mathrm{add}(\text{$M$: counter},\text{$x$: item})$ \\
\pl\>\BEGIN\\
\pl\>\>$i\leftarrow h_b(x)$;\\
\pl\>\>$M[i]\leftarrow\max \bigl\{M[i], \rho^+\bigl(h^b(x)\bigr)\bigr\}$\\
\pl\>\END; \COMMENT{// function add}\\
\pl\>\\ 
\pl\>\FUNCTION $\mathrm{size}(\text{$M$: counter})$ \\
\pl\>\BEGIN\\
\pl\>\>$Z\leftarrow \left(\sum_{j=0}^{m-1} 2^{-M[j]}\right)^{-1}$;\\
\pl\>\>\RETURN $E=\alpha_m m^2 Z$\\
\pl\>\END; \COMMENT{// function size}\\
\pl\>\\ 
\pl\>\FOREACH item $x$ seen in the stream \BEGIN\\ 
\pl\>\>add($M$,$x$)\\
\pl\>\END;\\
\pl\>print $\mathrm{size}(M)$\\
\end{tabbing}
\caption{\label{algo:Hyperloglog}The Hyperloglog counter as described
in~\cite{FFGH}: it allows one to count (approximately) the number of distinct
elements in a stream. $\alpha_m$ is a constant whose value depends on $m$ and
is provided in~\cite{FFGH}. Some technical details have been simplified.}
\end{algorithm}

The value $E$ printed by Algorithm~\ref{algo:Hyperloglog}
is~\cite{FFGH}[Theo\-rem 1] an asymptotically almost unbiased estimator for the
number $n$ of distinct elements in the stream; for $n \to\infty$, the 
\emph{relative standard deviation} (that
is, the ratio between the standard deviation of $E$ and $n$) is at most
$\beta_m/\sqrt m\leq 1.06/\sqrt{m}$, where $\beta_m$ is a suitable constant
(given in~\cite{FFGH}). Moreover~\cite{DuFLCLC} even if the size of the registers (and of the hash function) used by the algorithm is
unbounded, one can limit it to $\log\log(n/m)+\omega(n)$ bits obtaining almost
certainly the same output ($\omega(n)$ is a function going to infinity
arbitrarily slowly); overall, the algorithm requires $(1+o(1)) \cdot m
\log\log(n/m)$ bits of space (this is the reason why these counters are called
HyperLogLog). Here and in the rest of the paper we tacitly assume that $m\geq
64$ and that registers are made of $\lceil \log\log n\rceil$ bits.

\subsection{The HyperANF algorithm}

The approximate neighbourhood function algorithm described in~\cite{PGFANF}
is based on the observation that $B(x,r)$, the ball of radius $r$ around node $x$, satisfies
\[
B(x,r) = \bigcup_{x\to y}B(y,r-1).
\]
Since $B(x,0)=\{\,x\,\}$, we can compute each $B(x,r)$ incrementally using
sequential scans of the graph (i.e., scans in which we go in turn through the
successor list of each node). The obvious problem is that during the scan we
need to access randomly the sets $B(x,r-1)$ (the sets $B(x,r)$ can be just
saved on disk on a \emph{update file} and reloaded later). Here
probabilistic counters come into play; to be able to use them, though, we need 
to endow counters with a primitive for the union.
Union can be implemented provided that the counter associated to
the stream of data $AB$ can be computed from the counters associated to $A$ and $B$; in the case of
HyperLogLog counters, this is easily seen to correspond to maximising the
two counters, register by register.

The observations above result in Algorithm~\ref{algo:HyperANF}: the algorithm
keeps one HyperLogLog counter for each node; at the $t$-th iteration of the main loop, 
the counter $c[v]$ is in the same state as if it would have been fed with $B(v,t)$, 
and so its expected value is $|B(v,t)|$. As a result, the sum of all $c[v]$'s is
an (almost) unbiased estimator of $N_G(t)$ (for a precise statement, see
Theorem~\ref{teo:corrhyperANF}).

\begin{algorithm}
\begin{tabbing}
\setcounter{prgline}{0}
\hspace{0.5cm} \= \hspace{0.3cm} \= \hspace{0.3cm} \= \hspace{0.3cm} \=
\hspace{0.3cm} \= \hspace{0.3cm} \=\kill\\
\pl\>$c[-]$, an array of $n$ HyperLogLog counters\\
\pl\>\\ 
\pl\>\FUNCTION $\mathrm{union}(\text{$M$: counter},\text{$N$: counter})$ \\
\pl\>\>\FOREACH $i<m$ \BEGIN\\
\pl\>\>\>$M[i] \leftarrow \max(M[i],N[i])$\\
\pl\>\>\END\\
\pl\>\END; \COMMENT{// function union}\\
\pl\>\\ 
\pl\>\FOREACH $v\in n$ \BEGIN\\
\pl\>\>add $v$ to $c[v]$\\
\pl\>\END;\\ 	
\pl\>$t\leftarrow 0$;\\
\pl\>\DO \BEGIN\\
\pl\>\>$s\leftarrow \sum_v \mathrm{size}(c[v])$;\\
\pl\>\>Print $s$ (the neighbourhood function $N_G(t)$)\\
\pl\>\>\FOREACH $v\in n$ \BEGIN\\
\pl\>\>\>$m\leftarrow c[v]$;\\
\pl\>\>\>\FOREACH $v\rightarrow w$ \BEGIN\\
\pl\>\>\>\>$m\leftarrow \mathrm{union}(c[w],m)$\\
\pl\>\>\>\END; 	\\
\pl\>\>\>write $\langle v,m\rangle$ to disk\\
\pl\>\>\END; 	\\
\pl\>\>Read the pairs $\langle v,m\rangle$ and update the array $c[-]$\\
\pl\>\>$t\leftarrow t+1$\\
\pl\>\UNTIL no counter changes its value.
\end{tabbing}
\caption{\label{algo:HyperANF}The basic HyperANF algorithm in pseudocode. The
algorithm uses, for each node $i\in n$, an initially empty HyperLogLog counter
$c_i$. The function $\mathrm{union}(-,-)$ maximises two counters register by
register.}
\end{algorithm}

We remark that the only sound way of running HyperANF (or ANF) is to wait for
all counters to stabilise (e.g., the last iteration must leave all counters
unchanged). As we will see, any alternative termination condition may lead to
arbitrarily large mistakes on pathological graphs.\footnote{We remark that
\texttt{snap} uses a threshold over the relative increment in the number of
reachable pairs as a termination condition, but this trick makes the tail of
the function unreliable.}

\subsection{HyperANF at hyper speed}
\label{sec:broad}

Up to now, HyperANF has been described just as ANF with HyperLogLog counters.
The effect of this change is an exponential reduction in the memory footprint
and, consequently, in memory access time. We now describe the the algorithmic
and engineering ideas that made HyperANF much faster, actually so fast that
it is possible to run it up to stabilisation.

\smallskip\noindent\textbf{Union via broadword programming.}
Given two HyperLogLog counters that have been set by streams $A$ and $B$, the
counter associated to the stream $AB$ can be build by maximising in parallel
the registers of each counter. That is, the register $i$ of the new counter is
given by the maximum between the $i$-th register of the first counter and
the $i$-th register of the second counter.

Each time we scan a successor list, we need to maximise a large number of
registers and store the resulting counter. The immediate way of obtaining this
result requires extracting the value of each register, maximise it with the
other corresponding registers, and writing down the result in a temporary
counter. This process is extremely slow, as registers are packed in 64-bit
memory words. In the case of Flajolet--Martin counters, the problem is easily
solved by computing the logical OR of the words containing the registers. In
our case, we resort to \emph{broadword programming} techniques. If the machine
word is $w$, we assume that at least $w$ registers are allocated to each
counter, so each set of registers is word-aligned. 

Let $\shiftr$ and
$\shiftl$ denote right and left (zero-filled) shifting, $\band$,
$\bor$ and $\bxor$ denote bit-by-bit not, and, or, and xor; $\bnot x$ denotes
the bit-by-bit complement of $x$.

%  We pervasively use precedence to avoid
% excessive parentheses, and we use the same precedence conventions of the
% C programming language: arithmetic operators come first, ordered in the standard
% way, followed by shifts, followed by logical operators; $\bxor$ sits between
% $\bor$ and $\band$.

We use $L_k$ to denote the constant whose ones are in position $0$, $k$,
$2k$, \ldots\, that is, the constant with the \emph{lowest} bit of each $k$-bit subword
set (e.g, $L_8=0x01010101010101010101$). We use $H_k$ to denote $L_k \shiftl
k-1$, that is, the constant with the \emph{highest} bit of each $k$-bit subword
set (e.g, $H_8=0x8080808080808080$).

It is known (see~\cite{KnuACPBTT}, or \cite{VigBIRSQ} for an elementary proof),
that the following expression
\[
x<_k^u y :=\Bigl(\bigl(\, ( (x \bor H_k ) - ( y\band \bnot{H_k} ) )\bor {x\oplus y}\bigr)
\oplus (x\bor \bnot y)\Bigr)\band H_k.
\]
performs a parallel unsigned comparison $k$-by-$k$-bit-wise. At the end of the
computation, the highest bit of each block of $k$ bits will be set iff the
corresponding comparison is true (i.e., the value of the block in $x$ is
strictly smaller than the value of the block in $y$).

Once we have computed $x<_k^u$, we generate a mask that is made
entirely of 1s, or of 0s, for each $k$-bit block, depending on whether we should select the value of $x$ or $y$
for that block:
\[
		 m = \biggl( \Bigl( \bigl( (x<_k^u y)\shiftr k-1 \bor H_k \bigr) - L_k \Bigr)
		 \bor H_k \biggr) \bxor (x<_k^u y)
\]
This formula works by moving the high bit denoting the result of the comparison
to the least significant bit (of each $k$-bit block). Then, we or with $H_k$ and
subtract $1$ from each block, obtaining either a mask with just the high bit
set (if we were starting from 1) or a mask with all bits sets except for the
high bit (if we were starting from 0). The last two operation fix those values
so that they become $00\cdots0$ or $11\cdots 1$. The result of the maximisation
process is now just $x \band m \bor y \band \bnot m$.

This discussion assumed that the set of registers of a counter is stored in a
single machine word. In a realistic setting, the registers are spread among
several consecutive words, and we use multiple precision subtractions and shifts
to apply the expressions above on a sequence of words. All other (logical)
operations have just to be applied to each word in sequence.

All in all, by using the techniques above we can improve the speed of
maximisation by a factor of $w/k$, which in our case is about 13 (for graphs of
up to $2^{32}$ nodes). This actually results in a sixfold speed improvement of
the overall application in typical cases (e.g., web graphs and $b=8$), as about
90\% of the computation time is spent in maximisation.

\smallskip\noindent\textbf{Parallelisation via task decomposition.} Although HyperANF is
written as a sequential algorithm, the outer loop lends itself to
be executed in parallel, which can be extremely fruitful on a modern multicore
architecture; in particular, we approach this idea using
\emph{task decomposition}. We divide the iteration on the whole set of nodes into
a set of small tasks (in the order of the thousands), where each task consists
in iterating on a contiguous segment of nodes. A pool of threads picks up
the first available task and solves it: as a result, we obtain a performance
improvement that is linear in the number of cores. Threads
can be designed to be extremely agile, helped by WebGraph's facilities which
allow us to provide each thread with a lightweight copy of the graph that shares
the bitstream and associated information with all other threads.

\smallskip\noindent\textbf{Tracking modified counters.}
It is an easy observation that a counter $c$ that does not change its value is
not useful for the next step of the computation: all counters using $c$ during
their update would not change their value when maximising with $c$ (and
we do not even need to write $c$ on disk). We
thus keep track of modified counters and skip altogether the maximisation step with
unmodified ones. Since, as we already remarked, 90\% of computation time is
spent in maximisation, this approach leads to a large speedup after the first
phases of the computation, when most counters are stabilised.

For the same reason, we keep track of the harmonic partial sums of small blocks
(e.g., $64$) of counters. The amount of memory required is negligible, but if no
counter in the block has been modified, we can avoid a costly computation.

\smallskip\noindent\textbf{Systolic computation.}
HyperANF can be run in \emph{systolic} mode. In this case, we use also the
transposed graph: whenever a counter changes, it \emph{signals} back to its
predecessors that at the next round they could change their values. Now, at each
iteration nodes that have not been signalled are entirely skipped during the
computation. Systolic computations are fundamental to get high-precision runs,
as they reduce the cost of an iteration to scanning only the arcs of the graph
that are actually moving information around. We switch to systolic computation
when less than one quarter of the counters change their values.

\subsection{Correctness, errors and memory usage}

Very little has been published about the statistical behaviour of ANF. The
statistical properties of approximate counters are well known, but the values
of such counters for each node are \emph{highly dependent}, and adding them in a
large amount can in principle lead to an arbitrarily large variance. Thus,
making precise statistical statements about the outcome of a computation of ANF or HyperANF requires some
care. The discussion in the following sections is based on HyperANF, but its
results can be applied \textit{mutatis mutandis} to ANF as well.

Consider the output $\Nhat_G(t)$ of algorithm
\ref{algo:HyperANF} at a fixed iteration $t$. We can see it as a random variable 
\[
\Nhat_G(t)=\sum_{i \in n} X_{i,t}
\] 
where\footnote{Throughout this paper, we use von Neumann's
notation $n=\{\,0,1,\dots,n-1\,\}$, so $i\in n$ means that $0\leq i<n$.} each
$X_{i,t}$ is the HyperLogLog counter that counts nodes reached by node $i$ in $t$ steps; what we want to prove in this section is a bound on
the relative standard deviation of $\Nhat_G(t)$ (such a proof,
albeit not difficult, is not provided in the papers about ANF). First observe that~\cite{FFGH}, for a fixed a number of registers $m$ per counter, 
the standard deviation of $X_{i,t}$ satisfies
\[\frac{\StdDev{X_{i,t}}}{|B(i,t)|}\leq \eta_m, \] where $\eta_m$ is the guaranteed relative standard deviation of a HyperLogLog counter. Using
the subadditivity of standard deviation (i.e., if $A$ and $B$ have finite variance, $\StdDev{A+B}\leq \StdDev A+\StdDev B$), 
we prove the following
\begin{theorem}\label{teo:corrhyperANF}
The output $\Nhat_G(t)$ of Algorithm~\ref{algo:HyperANF} at the $t$-th iteration
is an asymptotically almost unbiased estimator\footnote{From now on, for the
sake of readability we shall ignore the negligible bias on $\Nhat_G(t)$ as an estimator for $N_G(t)$: the other estimators that will
appear later on will be qualified as ``(almost) unbiased'', where ``almost''
refers precisely to the above mentioned negligible bias.} of $N_G(t)$, that is
\[
	\frac{E[\Nhat_G(t)]}{N_G(t)}=1+\delta_1(n)+o(1) \text{ for $n\to \infty$},
\]
where $\delta_1$ is the same as in~\cite{FFGH}[Theorem 1] (and
$|\delta_1(x)|<5\cdot 10^{-5}$ as soon as $m\geq 16$). 

Moreover, $\Nhat_G(t)$
has the same relative standard deviation of the $X_i$'s, that is
\[
\frac{\StdDev{\Nhat_G(t)}}{N_G(t)} \leq \eta_m.
\]
\end{theorem}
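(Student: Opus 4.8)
The plan is to treat the two claims separately: the bias follows from linearity of expectation together with the per-counter guarantee of~\cite{FFGH}[Theorem 1], while the relative standard deviation is controlled by the subadditivity of standard deviation, which is exactly the device that sidesteps the dependence between the per-node counters.

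First I would establish the bias estimate. By linearity of expectation, $E[\Nhat_G(t)]=\sum_{i\in n}E[X_{i,t}]$. Since $X_{i,t}$ is the HyperLogLog estimator in the state it would reach after being fed the set $B(i,t)$, Theorem~1 of~\cite{FFGH} gives $E[X_{i,t}]=|B(i,t)|\bigl(1+\delta_1(|B(i,t)|)+o(1)\bigr)$. Dividing by $N_G(t)=\sum_i|B(i,t)|$, the ratio $E[\Nhat_G(t)]/N_G(t)$ is a convex combination, with weights $|B(i,t)|/N_G(t)$, of the per-counter factors $1+\delta_1(|B(i,t)|)+o(1)$. Because $|\delta_1(x)|<5\cdot 10^{-5}$ uniformly in $x$ as soon as $m\geq 16$, any such convex combination is again of the form $1+\delta_1+o(1)$ with the same bound on the bias term, which is precisely the claimed estimate.

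The variance bound is where the dependence between counters must be tamed. The crucial point is that subadditivity of standard deviation holds irrespective of correlation: for any square-integrable $A,B$, Cauchy--Schwarz yields $\mathrm{Cov}(A,B)\leq\StdDev{A}\StdDev{B}$, whence $\Var{A+B}=\Var{A}+\Var{B}+2\,\mathrm{Cov}(A,B)\leq\bigl(\StdDev{A}+\StdDev{B}\bigr)^2$. Iterating this inequality over the $n$ counters gives $\StdDev{\Nhat_G(t)}\leq\sum_{i\in n}\StdDev{X_{i,t}}$. I would then substitute the per-counter guarantee $\StdDev{X_{i,t}}\leq\eta_m\,|B(i,t)|$ recalled just before the statement, obtaining $\StdDev{\Nhat_G(t)}\leq\eta_m\sum_i|B(i,t)|=\eta_m\,N_G(t)$, which is the desired inequality after dividing through by $N_G(t)$.

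The main obstacle is conceptual rather than computational: the naive route of bounding $\Var{\Nhat_G(t)}$ by a sum of individual variances fails, because the covariances between the $X_{i,t}$ can be large and positive—neighbouring nodes see heavily overlapping balls through one and the same hash function, so the counters are \emph{highly dependent}. Subadditivity of standard deviation avoids any attempt to estimate these covariances directly, at the acceptable cost that the bound is tight only in the degenerate case of perfectly correlated counters. This is exactly why the relative standard deviation of the aggregate merely \emph{matches}, rather than improves upon, that of a single counter, and it explains why increasing the number of summands does not degrade the guarantee.
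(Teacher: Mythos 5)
Your proof is correct and follows essentially the same route as the paper's: linearity of expectation plus Theorem~1 of~\cite{FFGH} for the bias, and subadditivity of the standard deviation, $\StdDev{\Nhat_G(t)}\leq\sum_{i\in n}\StdDev{X_{i,t}}\leq\eta_m\sum_{i\in n}|B(i,t)|=\eta_m N_G(t)$, for the spread. You merely fill in details the paper leaves implicit (the Cauchy--Schwarz justification of subadditivity and the convex-combination handling of the $\delta_1(|B(i,t)|)$ terms, which the paper glosses as $\delta_1(n)$), so no substantive difference.
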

\begin{proof}
We have that $E[\Nhat_G(t)]=E\bigl[\sum_{i \in n}X_{i,t}\bigr]$.
By Theorem 1 of~\cite{FFGH},
$E[X_{i,t}]=|B(i,t)|\left(1+\delta_1(n)+o(1)\right)$, hence the first statement.
For the second result, we have:
\[\frac{\StdDev {\Nhat_G(t)}}{N_G(t)} \leq \frac{\sum_{i\in
n}\StdDev{X_i}}{N_G(t)}\leq \frac{\eta_m\sum_{i\in n}|B(i,t)|}{N_G(t)} =
\eta_m.
\]
\end{proof}

Since, as we recalled in Section~\ref{sec:hyper}, the relative standard
deviation $\eta_m$ satisfies $\eta_m\leq 1.06/\sqrt m$, 
to get a specific value $\eta$ it is sufficient to choose $m\approx
1.12/\eta^2$; this assumption yields an overall space requirement of about
\[
	\frac{1.12}{\eta^2} n \log\log n \qquad\text{bits}
\]
(here, we used the obvious upper bound $|B(i,t)|\leq n$).
For instance, to obtain a relative standard deviation of $9.37\%$ (in every
iteration) on a graph of one billion nodes one needs $74.5$\,GB of main memory
for the registers (for a comparison, \texttt{snap} would require $550$\,GB). Note that
since we write to disk the new values of the registers, this is actually the
only significant memory requirement (the graph can be kept on disk and
mapped in memory, as it is scanned almost sequentially).

% \begin{figure}
% \centering
% \includegraphics[scale=.6]{}
% \caption{\label{fig:ng}The line shows $N_G(t)$ for the dataset
% \texttt{cnr-2000}, whereas the points represent the results
% obtained executing HyperANF 500 times with $m=256$.}\end{figure}

\medskip
Applying Chebyshev's inequality, we obtain the following: 
\begin{corollary}\label{cor:cheb}
For every $\epsilon$, 
\[
\Pr\left[\frac{\Nhat_G(t)}{N_G(t)} \in (1-\epsilon,1+\epsilon)\right] \geq
1-\frac{\eta_m^2}{\epsilon^2}.
\]
\end{corollary}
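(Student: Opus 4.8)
The plan is to obtain the bound as an immediate application of Chebyshev's inequality to $\Nhat_G(t)$, feeding it the mean and variance estimates already established in Theorem~\ref{teo:corrhyperANF}. The first step is to recast the two-sided relative event into a centred deviation event: the complement of $\{\Nhat_G(t)/N_G(t)\in(1-\epsilon,1+\epsilon)\}$ is exactly $\{|\Nhat_G(t)-N_G(t)|\geq\epsilon N_G(t)\}$, so it is enough to upper-bound the probability of the latter.

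I would then apply Chebyshev in the form $\Pr[|\Nhat_G(t)-E[\Nhat_G(t)]|\geq a]\leq\Var{\Nhat_G(t)}/a^2$, taking $a=\epsilon N_G(t)$ and treating $\Nhat_G(t)$ as unbiased so that $E[\Nhat_G(t)]=N_G(t)$. This yields $\Pr[|\Nhat_G(t)-N_G(t)|\geq\epsilon N_G(t)]\leq\Var{\Nhat_G(t)}/(\epsilon^2 N_G(t)^2)$. Substituting the relative-standard-deviation bound of Theorem~\ref{teo:corrhyperANF}, namely $\StdDev{\Nhat_G(t)}\leq\eta_m N_G(t)$ and hence $\Var{\Nhat_G(t)}\leq\eta_m^2 N_G(t)^2$, makes the factors of $N_G(t)^2$ cancel and leaves the deviation probability bounded by $\eta_m^2/\epsilon^2$; passing to the complement gives the stated lower bound.

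Because the whole argument is a one-line consequence of Chebyshev's inequality, there is no genuine obstacle here. The single point that deserves care is the centring: Chebyshev is naturally stated about the true mean $E[\Nhat_G(t)]$, which differs from $N_G(t)$ by the multiplicative factor $1+\delta_1(n)+o(1)$ of Theorem~\ref{teo:corrhyperANF}. I would dispose of this exactly as that theorem does, invoking the stated convention that the negligible bias is ignored, so that centring on $N_G(t)$ is legitimate; since $|\delta_1(x)|<5\cdot10^{-5}$, a fully pedantic version would simply absorb this shift into $\epsilon$ with no material change to the bound.
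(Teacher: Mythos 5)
Your proposal is correct and takes essentially the same route as the paper's own proof: an application of Chebyshev's inequality centred at $E[\Nhat_G(t)]=N_G(t)$ with deviation threshold $\epsilon N_G(t)$, followed by the variance bound $\Var{\Nhat_G(t)}\leq \eta_m^2 N_G(t)^2$ from Theorem~\ref{teo:corrhyperANF}. Your closing remark on the centring is also exactly how the paper handles it, via its stated convention of ignoring the negligible bias $\delta_1$.
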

\iffalse %%Proof omitted
\begin{proof}
Chebyshev's theorem gives, for all $\alpha$,
\[
\Pr\left[\left|\Nhat_G(t)-E[\Nhat_G(t)]\right|\geq \alpha\right]\leq
\frac{\Var{\Nhat_G}}{\alpha^2};
\]
since $E[\Nhat_G(t)]=N_G(t)$, we conclude that
\[
\Pr\left[\left|1-\frac{\Nhat_G(t)}{N_G(t)}\right|\geq
\frac\alpha{N_G(t)}\right]\leq \frac{\Var{\Nhat_G}}{\alpha^2};
\]
choosing $\epsilon=\frac\alpha{N_G(t)}$ we have
\[
\Pr\left[\left|1-\frac{\Nhat_G(t)}{N_G(t)}\right|\geq \epsilon\right]\leq
\frac{\Var{\Nhat_G}}{N_G(t)^2\epsilon^2}\leq
\frac{\eta_m^2}{\epsilon^2}.
\]
\end{proof}
\fi

In~\cite{FFGH} it is argued that the HyperLogLog error is approximately
Gaussian; the counters, however, are \emph{not} statistically independent and
in fact the overall error does not appear to be normally distributed.
Nonetheless, for every fixed $t$, the random variable $\Nhat_G(t)$ seems to be
unimodal (for example, the average p-value of the Dip unimodality
test~\cite{HHDTU} for the \texttt{cnr-2000} dataset is $0.011$), so we can apply
the Vysochanski{\u\i}-Petunin inequality~\cite{VPJTSRUD}, obtaining the bound
\[
\Pr\left[\frac{\Nhat_G(t)}{N_G(t)} \in (1-\epsilon,1+\epsilon)\right] \geq
1-\frac{4\eta_m^2}{9\epsilon^2} .
\]
In the rest of the paper, to state clearly our theorems we will always assume
error $\epsilon$ with confidence $1-\delta$. It is useful, as a practical
reminder, to note that because of the above inequality for each
point of the neighbourhood function we can assume a relative error of $k\eta_m$ with confidence $1-4/(9k^2)$
(e.g., $2\eta_m$ with $90$\% confidence, or $3\eta_m$ with $95$\% confidence).

As an empirical counterpart to the previous results, we considered a
relatively small graph of about $325\,000$ nodes (\texttt{cnr-2000}, see
Section~\ref{sec:experiments} for a full description) for which we can compute
the exact neighbourhood function $N_G(-)$; we ran HyperANF 500 times with
$m=256$.
%(see Figure~\ref{fig:ng}).
 At least $96$\% of the samples (for all $t$) has a
relative error smaller than twice the theoretical relative standard deviation
$6.62\%$. The percentage jumps up to $100$\% for three times the relative
standard deviation, showing that the distribution of the values behaves better than
what the theory would guarantee.
%: actually, it is concentrated similarly to a normal
%distribution having an empirical relative standard deviation which is at
%most $5.98$\%, albeit statistical tests give little confidence on this
%hypothesis.

\section{Deriving useful data}

As advocated in~\cite{PGFANF}, being able to estimate the neighbourhood function
on real-world networks has several interesting applications. Unfortunately, all
published results we are aware of lack statistical satellite data
(such as confidence intervals, or distribution of the computed values) that make
it possible to compare results from different research groups. Thus, in this
section we try to discuss in detail how to derive useful data from an approximation of
the neighbourhood function.

\smallskip\noindent\textbf{The distance cdf.} We start from the apparently easy task
of computing the \emph{cumulative distribution function of distances} of the graph $G$ (in short, \emph{distance cdf}), which is
the function $H_G(t)$ that gives the fraction of reachable pairs at distance at
most $t$, that is,
\[
	H_G(t) = \frac{N_G(t)}{\max_t N_G(t)}.
\] 
In other words, given an exact computation of
the neighbourhood function, the distance cdf can be easily obtained by
dividing all values by the largest one. 
Being able to estimate $N_G(t)$ allows one to produce a reliable approximation
of the distance cdf:
\begin{theorem}\label{thm:hhat}
Assume $N_G(t)$ is known for each $t$ with error $\epsilon$ and
confidence $1-\delta$, that is
\[
\Pr\left[\frac{\Nhat_G(t)}{N_G(t)} \in (1-\epsilon,1+\epsilon)\right]\geq
1-\delta.
\]
Let $\Hhat_G(t)= \Nhat_G(t)/\max_t \Nhat_G(t)$.
Then $\Hhat_G(t)$ is an (almost) unbiased estimator for $H_G(t)$; moreover, 
for a fixed sequence \lst tk, for every $\epsilon$ and all $0\leq i<k$ we have
that $\Hhat_G(t_k)$ is known with error $2\epsilon$ and confidence
$1-(k+1)\delta$, that is,
\[
\Pr\left[\bigwedge_{i\in k}\frac{\Hhat_G(t_i)}{H_G(t_i)} \in
(1-2\epsilon,1+2\epsilon)\right] \geq 1 - (k+1)\delta.
\]
\end{theorem}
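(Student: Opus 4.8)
The plan is to estimate $\Hhat_G(t)$ by controlling both its numerator $\Nhat_G(t)$ and its denominator $\max_t\Nhat_G(t)$ simultaneously, using a union bound to merge the failure probabilities. The key observation is that $\max_t N_G(t)$ is attained at some fixed value $N_G(t^\star)$ (the total number of reachable pairs, reached once all counters stabilise), and that the estimate $\Nhat_G(t^\star)$ of this quantity is itself known with error $\epsilon$ and confidence $1-\delta$ by hypothesis. So the denominator is just \emph{one more} quantity being estimated, which is the source of the extra $\delta$ in the $(k+1)\delta$ confidence and of the doubling of the error.

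First I would set up the good events. For each of the $k$ indices $i$, let $G_i$ be the event that $\Nhat_G(t_i)/N_G(t_i)\in(1-\epsilon,1+\epsilon)$, and let $G_\star$ be the analogous event for the index $t^\star$ at which $\max_t N_G(t)$ is attained. Each event has probability at least $1-\delta$ by hypothesis, so by the union bound the intersection $G_\star\cap\bigcap_{i\in k}G_i$ holds with probability at least $1-(k+1)\delta$. (If some $t_i$ happens to equal $t^\star$, the bound only improves, so assuming all $k+1$ events distinct is the worst case.) The remaining work is purely deterministic: I would show that on this good event the ratio $\Hhat_G(t_i)/H_G(t_i)$ lands in $(1-2\epsilon,1+2\epsilon)$ for every $i$.

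The deterministic step is a ratio-of-ratios estimate. Writing $\Hhat_G(t_i)/H_G(t_i)=\bigl(\Nhat_G(t_i)/N_G(t_i)\bigr)\cdot\bigl(\max_t N_G(t)/\max_t\Nhat_G(t)\bigr)$, I would bound the first factor by $(1\pm\epsilon)$ directly from $G_i$. For the second factor I need that $\max_t\Nhat_G(t)$ is within a $(1\pm\epsilon)$ factor of $\max_t N_G(t)=N_G(t^\star)$; this follows because on $G_\star$ we have $\Nhat_G(t^\star)\ge(1-\epsilon)N_G(t^\star)$, so $\max_t\Nhat_G(t)\ge(1-\epsilon)\max_t N_G(t)$, and one also needs the reverse inequality to keep the estimated max from overshooting. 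Multiplying the two $(1\pm\epsilon)$ factors gives a worst-case envelope of roughly $(1\pm\epsilon)/(1\mp\epsilon)$, which I would then bound inside $(1-2\epsilon,1+2\epsilon)$ for $\epsilon$ small, accounting for the factor-of-two blow-up claimed in the statement.

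The main obstacle, and the step requiring the most care, is the reverse inequality for the denominator: controlling $\max_t\Nhat_G(t)$ from \emph{above} by $(1+\epsilon)\max_t N_G(t)$. The hypothesis only gives per-index two-sided control, but the maximum of the estimates could in principle be attained at an index $t'\ne t^\star$ where the estimate overshoots. To handle this cleanly I would note that on the full good event every $\Nhat_G(t_i)\le(1+\epsilon)N_G(t_i)\le(1+\epsilon)\max_t N_G(t)$, so the estimated maximum over the relevant indices cannot exceed $(1+\epsilon)\max_t N_G(t)$ either; making this precise across all indices that could realise the maximum is the delicate bookkeeping, but it reduces to the same union-bound event already assembled, so no new probability cost is incurred.
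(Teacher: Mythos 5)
Your overall architecture---a union bound over $k+1$ events followed by a deterministic ratio-of-ratios estimate---is the same as the paper's, and your lower bound $\max_t\Nhat_G(t)\ge(1-\epsilon)\max_t N_G(t)$ via the event $G_\star$ is fine. But the step you yourself flag as delicate, the upper bound $\max_t\Nhat_G(t)\le(1+\epsilon)\max_t N_G(t)$, has a genuine gap as written. The maximum in the denominator of $\Hhat_G$ ranges over \emph{all} $t$, not just over $\{t_0,\dots,t_{k-1},t^\star\}$: your good event $G_\star\cap\bigcap_{i\in k}G_i$ says nothing about $\Nhat_G(t')$ at an index $t'$ outside this set, and it is precisely at such an uncontrolled $t'$ that the estimated maximum could overshoot. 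Your closing claim that ``the estimated maximum over the relevant indices cannot exceed $(1+\epsilon)\max_t N_G(t)$'' controls only the maximum over the relevant indices, which is not the quantity appearing in $\Hhat_G$; patching this by brute force would require the hypothesis at every $t\le D_G$, costing confidence $1-(D_G+1)\delta$ rather than the claimed $1-(k+1)\delta$.

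The missing idea---the one the paper's proof rests on---is \emph{monotonicity}: both $N_G(t)$ and $\Nhat_G(t)$ are non-decreasing in $t$ (balls only grow, and HyperLogLog registers only increase across iterations, so the size estimate only increases; the paper notes just before its corollary that $\Hhat_G$ is monotone ``independently of the approximation error''). Hence each maximum, though possibly \emph{first} attained at different values of $t$, is attained at every sufficiently large $t$; choosing one common $T$ past both stabilisation points, a single application of the hypothesis at $T$ yields the two-sided bound $1-\epsilon\le\max_t\Nhat_G(t)/\max_t N_G(t)=\Nhat_G(T)/N_G(T)\le1+\epsilon$, at the cost of exactly one extra event---whence the $(k+1)\delta$. (This is also why the theorem requires HyperANF to be run to stabilisation, so the guarantee holds at $T$.) With that repair your argument coincides with the paper's. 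One last nit, which the paper itself glosses over: $(1+\epsilon)/(1-\epsilon)$ is strictly larger than $1+2\epsilon$, so the upper envelope genuinely needs the small-$\epsilon$ qualification you included, whereas the lower side $(1-\epsilon)/(1+\epsilon)\ge 1-2\epsilon$ holds exactly.
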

\begin{proof}
Note that if 
\[
1-\epsilon\leq \Nhat_G(t)/N_G(t) \leq 1+\epsilon
\] 
holds for every $t$, then \emph{a fortiori} 
\[1-\epsilon\leq \max_t\Nhat_G(t)/\max_tN_G(t) \leq 1+\epsilon
\] 
(because, although the maxima might be first attained at different values of
$t$, the same holds for any larger values).
As a consequence,
 \[1-2\epsilon\leq \frac{1-\epsilon}{1+\epsilon} \leq
\frac{\Hhat_G(t)}{H_G(t)} \leq \frac{1+\epsilon}{1-\epsilon} \leq 1+2\epsilon.\]
The probability $1-(k+1)\delta$ is immediate from the union bound, as we are
considering $k+1$ events at the same time.
\iffalse
\begin{align*}
	& \Pr\left[\frac{\Hhat_G(t)}{H_G(t)} \in (1-2\epsilon,1+2\epsilon)\right] \geq
	\\ & \geq \Pr\left[ \left| 1 - \frac{\Nhat_G(t)}{N_G(t)} \right| \leq \epsilon
	\wedge \left| 1 - \frac{\max_t \Nhat_G(t)}{\max_t N_G(t)} \right| \leq \epsilon
	\right] = \\ & = 1 - \Pr\left[ \left| 1 - \frac{\Nhat_G(t)}{N_G(t)} \right| \geq \frac{\epsilon}{2} \vee \left| 1 - \frac{\max_t \Nhat_G(t)}{\max_t N_G(t)} \right| \geq \frac{\epsilon}{2} \right] \geq \\ & \geq 1 - \Pr\left[ \left| 1 - \frac{\Nhat_G(t)}{N_G(t)} \right| \geq \frac{\epsilon}{2} \right] - \Pr\left[\left| 1 - \frac{\max_t \Nhat_G(t)}{\max_t N_G(t)} \right| \geq \frac{\epsilon}{2} \right] \geq \\ 
	& \geq 2\delta(\epsilon/2)-1 \; .
\end{align*}
\fi
\end{proof} 
Note two significant limitations: first of all, making precise statements
(i.e., with confidence) about \emph{all} points of $H_G(t)$ requires a very high
%(in fact, preposterous) 
initial error and confidence. Second,
the theorem holds if HyperANF has been run up to stabilisation, so that the probabilistic guarantees of HyperLogLog hold for all
$t$.

The first limitation makes in practice impossible to get directly sensible
confidence intervals, for instance, for the average distance or
higher moments of the distribution (we will elaborate further on this point later). Thus, only
statements about a small, finite number of points can be approached directly.

The second limitation is somewhat more serious in theory, albeit in practice it
can be circumvented making suitable assumptions about the graph under
examination (which however should be clearly stated along the data). Consider
the graph $G$ made by two $k$-cliques joined by a unidirectional path of $\ell$
nodes (see Figure~\ref{fig:counterexdiam}). Even neglecting the effect of
approximation, $G$ can ``fool'' HyperANF (or ANF) so that the distance cdf
is completely wrong (see Figure~\ref{fig:countercdf}) when using \emph{any}
stopping criterion that is not stabilisation.

\begin{figure}[htb]
\centering
\includegraphics[scale=.7]{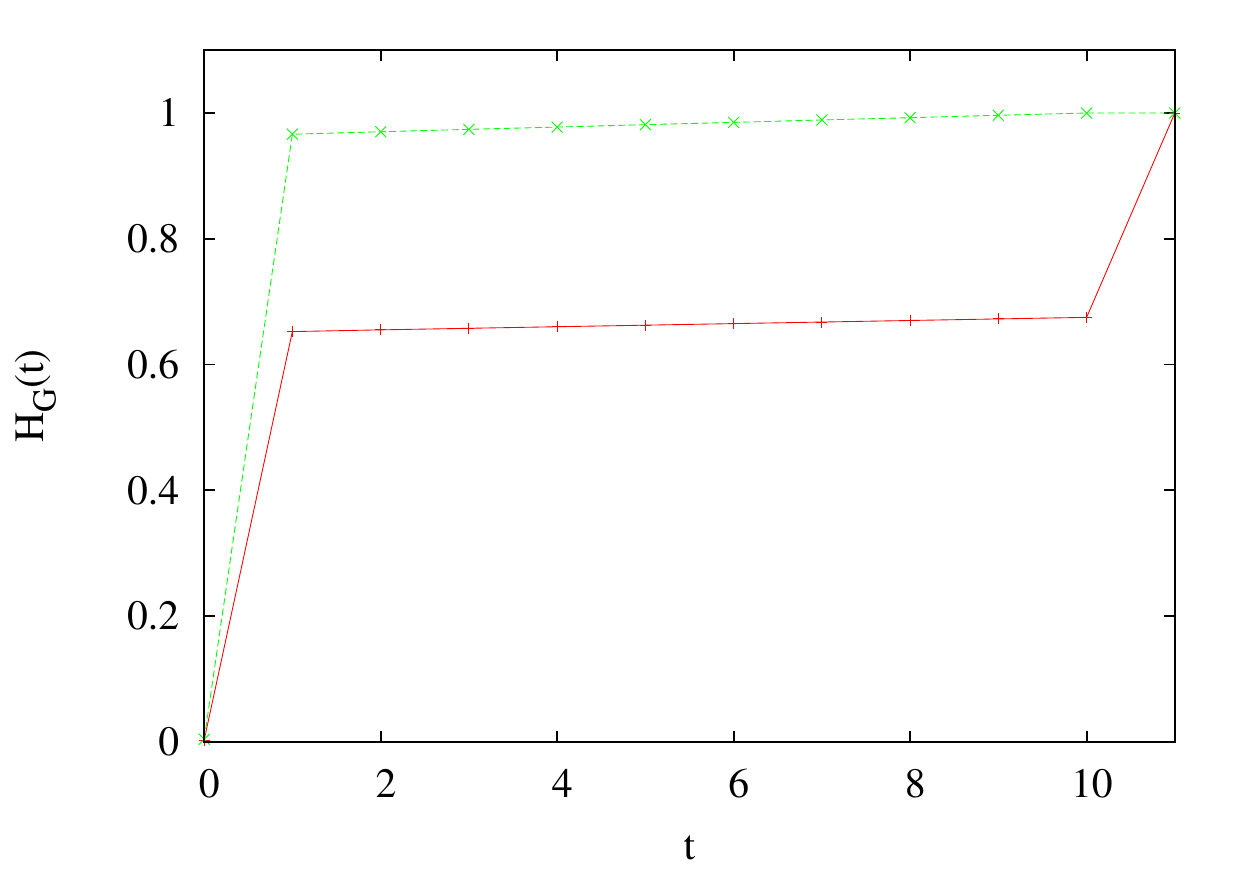}
\caption{\label{fig:countercdf}The real cdf of the graph in
Figure~\ref{fig:counterexdiam} (+), and the one that would be computed using
\emph{any} termination condition that is not stabilisation (*); here $\ell=10$
and $k=260$.}
\end{figure}

\begin{figure}[htb]
\centering
\includegraphics[scale=.7]{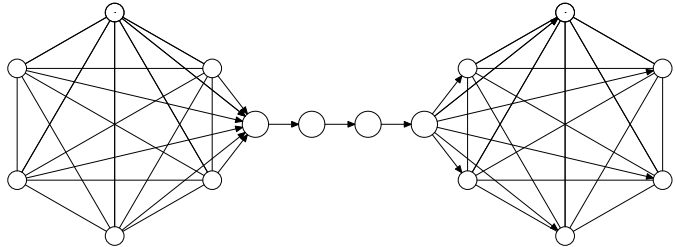}
\caption{\label{fig:counterexdiam}Two $k$-cliques joined by a
unidirectional path of $\ell$ nodes: terminating even one step earlier than 
stabilisation completely miscalculates the distance cdf (see
Figure~\ref{fig:countercdf}); the effective diameter is $\ell+1$, but
terminating even just one step earlier than stabilisation yields an estimated
effective diameter of 1.}
\end{figure}

Indeed, the exact neighbourhood function of $G$ is given by:
\[
	N_G(t)=\begin{cases}
			2k+\ell \qquad \text{if $t=0$}\\
			(t+1)\left(2k+\ell-\frac t2\right)-2k+2k^2 & \text{if $1\leq t\leq
			\ell$}\\
			(\ell+1)\left(2k+\frac \ell2\right)-2k+3k^2 &\text{if $\ell<t$.}
			\end{cases}
\] 
The key observation is that the very last value is significantly larger than
all previous values, as at the last step the nodes of the right clique become
reachable from the nodes of the first clique. Thus, if iteration stops
before stabilisation,\footnote{We remark that stabilisation can occur, in
principle, even before the last step because of hash collisions in HyperLogLog
counters, but this will happen with a controlled probability.} the normalisation
factor used to compute the cdf will be smaller by $\approx k^2$ than the actual
value, causing a completely wrong estimation of the cdf, as shown in Figure~\ref{fig:countercdf}.

Although this counterexample (which can be easily a\-dap\-ted to be symmetric)
is definitely pathological, it suggests that a particular care should be taken when
handling graphs that present narrow ``tubes'' connecting large connected
components: in such scenarios, the function $N_G(t)$ exhibits relatively long
plateaux (preceded and followed by sharp bumps) that may fool the computation of
the cdf. 

\smallskip\noindent\textbf{The effective diameter.} The first application of ANF
was the computation of the \emph{effective diameter}. The effective diameter of
$G$ at $\alpha$ is the smallest $t_0$ such that $H_G(t_0)\geq \alpha$; when $\alpha$
is omitted, it is assumed to be $\alpha=.9$.\footnote{The actual diameter of $G$
is its effective diameter at $1$, albeit the latter is defined for all graphs
whereas the former makes sense only in the strongly connected case.} The \emph{interpolated} effective
diameter is obtained in the same way on the \emph{linear interpolation}
of the points of the neighbourhood function.
% \footnote{For obvious linearity
% reasons, the sampled interpolated effective diameter is an unbiased estimator
% of the interpolated effective diameter, whereas the same does not hold for the
% effective diameter.}

Since that the function $\Hhat_G(t)$ is necessarily monotone in $t$
(independently of the approximation error), from Theorem~\ref{thm:hhat} we
obtain:
\begin{corollary}
Assume $\Nhat_G(t)$ is known for each $t$ with error $\epsilon$ and
confidence $1-\delta$, and there are points $s$ and $t$ such that
\[\frac{\Hhat_G(s)}{1-2\epsilon} \leq \alpha \leq
\frac{\Hhat_G(t)}{1+2\epsilon}.\] 
Then, with probability $1-3\delta$ the
effective diameter at $\alpha$ lies in $[s\.. t]$.
\end{corollary}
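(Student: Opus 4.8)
The plan is to reduce the statement to the two-point instance of Theorem~\ref{thm:hhat}, combined with the monotonicity of the true distance cdf $H_G$. The hypothesis singles out exactly two abscissae, $s$ and $t$, so I would apply Theorem~\ref{thm:hhat} to the length-two sequence consisting of these two points. Taking $k=2$ there, I obtain, with probability at least $1-3\delta$, the simultaneous event
\[
\frac{\Hhat_G(s)}{H_G(s)}\in(1-2\epsilon,1+2\epsilon)
\quad\text{and}\quad
\frac{\Hhat_G(t)}{H_G(t)}\in(1-2\epsilon,1+2\epsilon).
\]
This is exactly where the confidence $1-3\delta$ in the conclusion comes from: it is the $1-(k+1)\delta$ of Theorem~\ref{thm:hhat} evaluated at $k=2$.

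Working inside this high-probability event, I would next convert the two relative-error guarantees into one-sided bounds on $H_G$ at $s$ and $t$. The lower end of the relative interval at $s$ gives $H_G(s)<\Hhat_G(s)/(1-2\epsilon)$, which by the first hypothesis is at most $\alpha$, so $H_G(s)<\alpha$. Symmetrically, the upper end of the relative interval at $t$ gives $H_G(t)>\Hhat_G(t)/(1+2\epsilon)$, which by the second hypothesis is at least $\alpha$, so $H_G(t)>\alpha$. Thus, on this event, $H_G(s)<\alpha\leq H_G(t)$.

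Finally, I would read off the location of the effective diameter $D$, the least abscissa at which $H_G$ reaches $\alpha$. Since $H_G$ is non-decreasing and $H_G(s)<\alpha$, every abscissa $\leq s$ stays below $\alpha$, so $D>s$; since $H_G(t)\geq\alpha$, the point $t$ already reaches $\alpha$, so $D\leq t$ (in particular $D$ exists). Hence $D\in[s\.. t]$, and the entire chain holds with probability at least $1-3\delta$.

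The argument is essentially bookkeeping once Theorem~\ref{thm:hhat} is available; the two points that demand care are (i) applying the theorem to precisely the two relevant abscissae, so that its union bound produces the advertised $1-3\delta$ rather than a weaker confidence, and (ii) keeping the inequality directions straight when dividing by $1\pm2\epsilon$, since it is the lower relative bound at $s$ that upper-bounds $H_G(s)$ and the upper relative bound at $t$ that lower-bounds $H_G(t)$. No deeper obstacle arises: the definition of effective diameter together with the monotonicity of $H_G$ turns the scalar inequalities $H_G(s)<\alpha\leq H_G(t)$ directly into the bracketing $s\leq D\leq t$.
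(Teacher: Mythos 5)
Your proposal is correct and is precisely the argument the paper intends: the paper states this corollary without an explicit proof, deriving it directly from Theorem~\ref{thm:hhat} together with monotonicity of the cdf, and your write-up fills in exactly that derivation (the $k=2$ instance of the theorem giving confidence $1-(k+1)\delta = 1-3\delta$, the conversion of the relative bounds into $H_G(s)<\alpha\leq H_G(t)$, and the bracketing of the effective diameter by monotonicity). The inequality directions and the union-bound accounting all check out, so there is nothing to add.
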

% \begin{proof}
% Applying Theorem~\ref{thm:hhat} to the points $s$ and $t$, we obtain that with
% the stated probability
% \[
% 	1-2\epsilon \leq \frac{\Hhat_G(s)}{H_G(s)} \qquad
% 	\frac{\Hhat_G(t)}{H_G(t)} \leq 1+2\epsilon.
% \]
% Moreover, $\Hhat_G(s)\leq \alpha (1-2\epsilon)$ and
% 	$\alpha (1+2\epsilon) \leq \Hhat_G(t)$. From these inequalities we obtain
% $H_G(s) \leq \alpha \leq H_G(t)$.
% \end{proof}
Unfortunately, since the effective diameter depends sensitively on the
distance cdf, again termination conditions can produce arbitrary errors.
Getting back to the example of Figure~\ref{fig:counterexdiam}, with a
sufficiently large $k$, for example $k=2\ell^2+5\ell+2$, the effective diameter
is $\ell+1$, which would be correctly output after $\ell+1$ iterations,
whereas even stopping one step earlier (i.e., with $t=\ell$) 
would produce $1$ as output, yielding an arbitrarily large error. 
\texttt{snap}, indeed, fails to produce the correct result on
this graph, because it stops iterating whenever the ratio between
two successive iterates of $N_G$ is sufficiently close to 1.  

\begin{algorithm}
\begin{tabbing}
\setcounter{prgline}{0}
\hspace{0.5cm} \= \hspace{0.3cm} \= \hspace{0.3cm} \= \hspace{0.3cm} \=
\hspace{0.3cm} \= \hspace{0.3cm} \=\kill\\
\pl\>\FOREACH $t=0,1,\dots$ \BEGIN\\
\pl\>\>compute $\Nhat_G(t)$ (error $\epsilon$, confidence $1-\delta$)\\
\pl\>\>\IF (some termination condition holds) \BREAK\\
\pl\>\END;\\
%\pl\>Assume each $\Nhat_G(t)$ is known \\
%\pl \>with precision $\epsilon$ and confidence
%$1-\delta$ \\
\pl\>$M\leftarrow \max \Nhat_G(t)$\\
\pl\>find the largest $D^-$ such that
$\Nhat_G(D^-)/M\leq \alpha(1-2\epsilon)$\\ 
\pl\>find the smallest $D^+$ such that
$\Nhat_G(D^+)/M\geq \alpha(1+2\epsilon)$\\ 
\pl\>output $[D^-\..D^+]$ with confidence $1-3\delta$\\
\pl\>\END;
\end{tabbing}
\caption{\label{algo:diameter}Computing the effective diameter at $\alpha$ of a
graph $G$; Algorithm~\ref{algo:HyperANF} is used to compute $\Nhat_G$.}
\end{algorithm}

Algorithm~\ref{algo:diameter} is used to estimate the effective
diameter of a graph; albeit this approach is reasonable (and actually it is
similar to that adopted by \texttt{snap}, although the latter does not provide any
confidence interval), unless the neighbourhood function is known with very high 
precision it is almost impossible to obtain good upper bounds, because of the
typical flatness of the distance cdf after the 90th percentile. Moreover,
results computed using a termination condition different from stabilisation
should always be taken with a grain of salt because of the discussion above.

\smallskip\noindent\textbf{The distance density function.} The situation, from a
theoretical viewpoint, is somehow even worse when we consider the density
function $h_G(-)$ associated to the cdf $H_G(-)$. Controlling the error on
$h_G(-)$ is not easy:
\begin{lemma}
\label{lemma:df}
Assume that, for a given $t$, $\Hhat_G(t)$ is an estimator of $H_G(t)$ with
error $\epsilon$ and confidence $1-\delta$. 
Then $\hhat_G(t)= h_G(t)\pm 2\epsilon$ with confidence $1-2\delta$.
\end{lemma}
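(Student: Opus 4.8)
The plan is to treat $h_G$ as the discrete density attached to the cdf $H_G$, so that $h_G(t)=H_G(t)-H_G(t-1)$ and correspondingly $\hhat_G(t)=\Hhat_G(t)-\Hhat_G(t-1)$ (the argument is insensitive to the exact indexing convention). The whole statement then reduces to controlling a difference of two estimated cdf values, each of which comes equipped with the hypothesised error and confidence, applied at both $t$ and $t-1$.

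The first step is to downgrade the given \emph{relative} error on $\Hhat_G$ into an \emph{absolute} one. The hypothesis says that, with probability at least $1-\delta$,
\[
\left|\frac{\Hhat_G(t)}{H_G(t)}-1\right|\leq\epsilon,
\]
and since $H_G(t)\leq 1$ this immediately yields $|\Hhat_G(t)-H_G(t)|\leq\epsilon$; the same bound holds at $t-1$. The second step is a one-line triangle inequality:
\[
\bigl|\hhat_G(t)-h_G(t)\bigr|=\bigl|(\Hhat_G(t)-H_G(t))-(\Hhat_G(t-1)-H_G(t-1))\bigr|\leq\epsilon+\epsilon=2\epsilon,
\]
which gives the claimed absolute error $2\epsilon$. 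For the confidence I would invoke the union bound: the two good events (the absolute bound holding at $t$ and at $t-1$) each fail with probability at most $\delta$, so both hold simultaneously with probability at least $1-2\delta$, exactly as stated.

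There is no computational difficulty here; the only point worth flagging is the deliberate switch from a relative to an absolute guarantee. A relative bound on $h_G$ is simply unattainable: wherever the cdf is nearly flat, $h_G(t)$ can be arbitrarily close to zero while the absolute errors in the two cdf values remain of order $\epsilon$, so the ratio $\hhat_G(t)/h_G(t)$ can blow up. This is precisely the phenomenon the surrounding discussion warns about, and it is why the lemma reports $h_G(t)\pm 2\epsilon$ rather than a multiplicative interval.
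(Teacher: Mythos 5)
Your proof is correct and is essentially the paper's own argument: the paper likewise writes $\hhat_G(t)=\Hhat_G(t)-\Hhat_G(t-1)$, bounds it between $(1+\epsilon)H_G(t)-(1-\epsilon)H_G(t-1)$ and the symmetric lower expression, and uses $H_G(t),H_G(t-1)\leq 1$ to absorb the relative error into an additive $2\epsilon$, with the (implicit) union bound over the two points giving confidence $1-2\delta$. Your explicit relative-to-absolute conversion is just a repackaging of the same two facts, and your closing remark about the impossibility of a multiplicative guarantee matches the paper's own note that the generic lower bound $h_G(t)\geq 1/n^2$ makes the relative error as bad as $2\epsilon n^2$.
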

\begin{proof}
With confidence
$1-2\delta$,
\begin{align*}
\hhat_G(t)&=\Hhat_G(t)-\Hhat_G(t-1)\\
& \leq (1+\epsilon)H_G(t)-(1-\epsilon)H_G(t-1)\leq
h_G(t)+2\epsilon,
\end{align*}
and similarly $\hhat_G(t)\geq h_G(t)-2\epsilon$.
\end{proof}
Note that the bound is very weak: since our best generic lower bound is
$h_G(t)\geq 1/n^2$, the relative error with which we known a point $h_G(t)$ is
$2\epsilon n^2$ (which, of course, is pretty useless).

\smallskip\noindent\textbf{Moments.} Evaluation of the moments of $h_G(-)$ poses further
problems. Actually, by Lemma~\ref{lemma:df} we can deduce that
\[
 \sum_t th_G(t) - 2\epsilon D_G\leq\sum_t t\hat h_G(t)\leq \sum_t
 th_G(t) + 2 \epsilon D_G
\]
with confidence $1-2D_G\epsilon$, where $D_G$ is the diameter of $G$, which
implies that the expected value of $\hat h_G(-)$ is an (almost) unbiased estimator of the expected
value of $h_G(-)$. Nonetheless, the bounds we obtain are horrible (and actually
unusable).

The situation for the variance is even worse, as we have to \emph{prove} that we
can use $\Var{\hhat_G}$ as an estimator to $\Var{h_G}$.
Note that for a fixed graph $G$, $H_G$ is a
precise distribution and $\Var{h_G}$ is an \emph{actual number}. Conversely,
$\hhat_G$ (and hence $\Var{\hhat_G}$) is a random variable\footnote{More
precisely, $\hhat_G$ is a sequence of (stochastically dependent) random variables
$\hhat_G(0)$, $\hhat_G(1)$, \dots}. By Theorem~\ref{thm:hhat}, we know that
$\Hhat_G$ is an (almost) unbiased pointwise estimator for $H_G$, and that we
can control its concentration by suitably choosing the number $m$ of
counters. We are going to derive bounds on the approximation of $\Var{h_G}$
using the values of $\Hhat_G(t)$ up to $ \hat D_G$ (i.e., the iteration at which HyperANF stabilises):
\begin{lemma}
Assume that, for every $0\leq t\leq \hat D_G$,  
$\Hhat_G(t)$ is an estimator of $H_G(t)$ with error $\epsilon$ and confidence $1-\delta$; 
then, $\Var{\hhat_G}$ is an estimator of
$\Var{h_G}$ with error
\[
	\epsilon\leq 8 \epsilon\frac{D_G^3}{\Var{h_G}} + 4 \epsilon^2\frac{ D_G^4}{\Var{h_G}}
\]
and confidence $1-(D_G+1)\delta$.
\end{lemma}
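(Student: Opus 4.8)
The plan is to reduce the variance to its first two moments and to propagate the pointwise density error supplied by Lemma~\ref{lemma:df} through each moment separately. Writing $\mu=\sum_t t\,h_G(t)$ for the true mean and $\hat\mu=\sum_t t\,\hhat_G(t)$ for its plug-in estimate, I would start from $\Var{h_G}=\sum_t t^2 h_G(t)-\mu^2$ and the analogous expression $\Var{\hhat_G}=\sum_t t^2\hhat_G(t)-\hat\mu^2$, so that
\[
\Var{\hhat_G}-\Var{h_G}=\sum_t t^2\bigl(\hhat_G(t)-h_G(t)\bigr)-\bigl(\hat\mu^2-\mu^2\bigr),
\]
where all sums run over $0\leq t\leq D_G$, the support of $h_G$ (and, after stabilisation, of $\hhat_G$, so that $\hat D_G=D_G$). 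The confidence is settled first: for a single $t$, the bound $\hhat_G(t)=h_G(t)\pm2\epsilon$ of Lemma~\ref{lemma:df} rests on two cdf estimates, but to control all of $\hhat_G(0),\dots,\hhat_G(D_G)$ at once it suffices that the $D_G+1$ events ``$\Hhat_G(t)$ is within its error'' hold (the value at $t=-1$ being exactly $0$). A union bound over these $D_G+1$ events gives probability at least $1-(D_G+1)\delta$, and on this event every inequality $|\hhat_G(t)-h_G(t)|\leq2\epsilon$ holds simultaneously, making the rest of the argument deterministic.

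On this good event I would then bound the two contributions. For the linear second-moment term the pointwise bound gives $\bigl|\sum_t t^2(\hhat_G(t)-h_G(t))\bigr|\leq 2\epsilon\sum_{t=0}^{D_G}t^2\leq 4\epsilon D_G^3$, using the crude power-sum estimate $\sum_{t=0}^{D_G}t^2\leq(D_G+1)D_G^2\leq 2D_G^3$. For the mean I would similarly obtain $|\hat\mu-\mu|\leq 2\epsilon\sum_{t=0}^{D_G}t\leq 2\epsilon D_G^2$, since $\sum_{t=0}^{D_G}t\leq D_G^2$ for $D_G\geq1$.

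The genuinely nonlinear piece is $\hat\mu^2-\mu^2=(\hat\mu-\mu)(\hat\mu+\mu)$, and this is where the care lies, since it is what produces both orders of $\epsilon$. Using $\mu\leq D_G$ (the mean distance cannot exceed the diameter) together with $\hat\mu\leq\mu+2\epsilon D_G^2\leq D_G+2\epsilon D_G^2$, I would bound $\hat\mu+\mu\leq 2D_G+2\epsilon D_G^2$, whence
\[
|\hat\mu^2-\mu^2|\leq 2\epsilon D_G^2\bigl(2D_G+2\epsilon D_G^2\bigr)=4\epsilon D_G^3+4\epsilon^2 D_G^4.
\]
Adding the two contributions gives $|\Var{\hhat_G}-\Var{h_G}|\leq 8\epsilon D_G^3+4\epsilon^2 D_G^4$, and dividing by $\Var{h_G}$ yields exactly the stated relative error, at the confidence $1-(D_G+1)\delta$ established above. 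The main obstacle is precisely this nonlinear mean-square term: the linear moment error is immediate from Lemma~\ref{lemma:df}, but $\hat\mu^2-\mu^2$ forces one to control the \emph{magnitude} $\hat\mu+\mu$ via $\mu\leq D_G$, and it is this that both creates the second-order $\epsilon^2 D_G^4$ term and inflates the constants to the large, practically unusable size the paper warns about.
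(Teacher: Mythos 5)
Your proof is correct and takes essentially the same route as the paper's: both propagate the pointwise bound $|\hhat_G(t)-h_G(t)|\leq 2\epsilon$ of Lemma~\ref{lemma:df} through the first two moments over $t\in[0\..D_G]$, using $\sum_t t\leq D_G^2$, $\sum_t t^2\leq 2D_G^3$, $E[h_G]\leq D_G$ and a union bound over the $D_G+1$ cdf values, your factorisation $\hat\mu^2-\mu^2=(\hat\mu-\mu)(\hat\mu+\mu)$ being just an algebraic rearrangement of the paper's expansion of $\bigl(\sum_t t\,h_G(t)-2\epsilon\sum_t t\bigr)^2$. One cosmetic slip: the paper assumes only $\hat D_G\leq D_G$ (stabilisation can occur early through hash collisions) rather than $\hat D_G=D_G$ as you assert, but since $\hhat_G(t)=0$ for $t>\hat D_G$ your sums over $[0\..D_G]$ are unaffected.
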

\begin{proof}
Assuming error $\epsilon$ on the values of $\hat H$ in $[0\..D_G]$ implies confidence $1-(D_G + 1)\delta$. 
Since $\hat D_G\leq D_G<\infty$, and by definition $\hat h_G(t)=0$ for $t>\hat D_G$ we have ($t$ ranges in $[0\..D_G]$):
\begin{align*}
\Var{\hhat_G} &= \sum_t t^2 \hat h_G(t) -
\Bigl(\sum_t t \hat h_G(t)\Bigr)^2 \\
& \leq \sum_t t^2 \bigl( h_G(t) +
2\epsilon\bigr) - \Bigl(\sum_t t h_G(t)  - 2\epsilon\sum_ t t\Bigr)^2\\
 & \leq \Var{h_G} +2\epsilon\sum_t t^2 + 4\epsilon E[h_G]\sum_t t \\
 & \leq \Var{h_G} + 4\epsilon D_G^2\bigl(D_G + E[h_G]\bigr)\\
& \leq \Var{h_G} + 8\epsilon D_G^3,
\end{align*}
where $E[h_G]$ is the average path length. Similarly
\[
\Var {\hhat_G} \geq \Var{h_G} - 8\epsilon D_G^3 - 4\epsilon^2D_G^4.
\]
Hence the statement.
\end{proof}
The error and confidence we obtain are again unusable, but the lemma proves
that with enough precision and confidence on $\Hhat_G(-)$ we can get precision
and confidence on $\Var{h_G}$.

The results in this section suggests that if computations involve the moments
the only realistic possibility is to resort to parametric statistics to study the
behaviour of the value of interest on a large number of samples. That is, it is
better to compute a large number of relatively low-precision approximate
neighbourhood functions than a small number of high-precision ones, as
from the former the latter are easily computable by averaging, whereas it is
impossible to obtain a large number of samples of derived values from the
latter. As we will see, this approach works surprisingly well.

\section{SPID}

The main purpose of computing aggregated data such as the distance
distribution is that we can try to define indices that express some structural
property of the graph we study, an obvious example being the average distance, or the effective diameter.

One of the main goal of our recent research has
been finding a simple property that clearly distinguishes between social networks
deriving from human interaction (what is usually called a social network in the
strong or proper sense: DBLP, Facebook, etc.) and web-based graphs, which
share several properties of social networks, and as the latter arise from human
activity, but present a visibly different structure.

In this paper we propose for the first time to use the \emph{index of
dispersion} $\sigma^2/\mu$ (a.k.a.~\emph{variance-to-mean ratio}) of the
distance distribution as a measure of the ``webbiness'' of a social
network. We call such an index the \emph{spid (shortest-paths index of dispersion)\footnote{If we
were to follow strictly the terminology used in this paper, this would be the index of dispersion of
the distance distribution, but we guessed that the acronym IDDD would not have
been as as successful.} of $G$}. In particular, networks with a spid larger than one are
to be considered ``web-like'', whereas networks with a spid smaller than one
are to be considered ``properly social''. We recall that a
distribution is called under- or over-dispersed depending on whether its index
of dispersion is smaller or larger than 1, so a network is properly social or
not depending on whether its distance distribution is under- or
over-dispersed.

The intuition behind the spid is that ``properly social'' networks strongly
favour short connections, whereas in the web long connection are not uncommon:
this intuition will be confirmed in Section~\ref{sec:experiments}.

As discussed in the previous section, in theory estimating the spid is an
impossible task, due to the inherent difficulty of evaluating the moments of
$h_G(-)$. In practice, however, the estimate of the spid computed directly on
runs of HyperANF are quite precise. From the actual neighbourhood
function computed for \texttt{cnr-2000} we deduce that the graph spid is $2.49$.
We then ran $100$ iteration of HyperANF with a relative standard deviation of
$9.37$\%, computing for each of them an estimation of the spid; these values approximately follow a normal
distribution of mean $2.489$ and standard deviation $0.9$ (see
Figure~\ref{fig:spidcdf}).
\begin{figure}
\centering
\includegraphics[scale=.30]{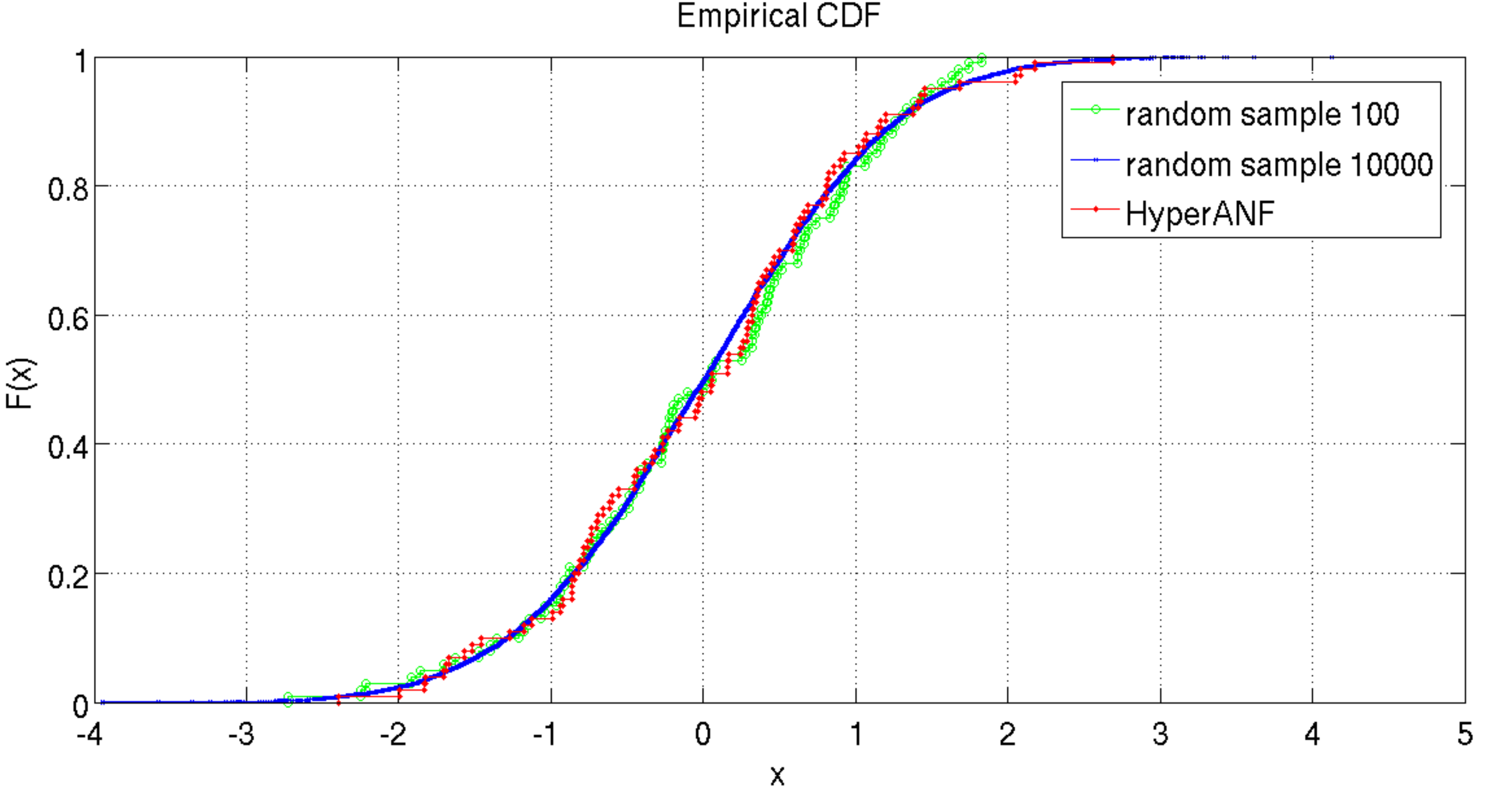}
\caption{\label{fig:spidcdf} Cumulative density function of 100 values of the spid
computed using HyperANF on \texttt{cnr-2000}. For comparison, we also plot
random samples of size $100$ and $10\,000$ drawn from a normal distribution.}
\end{figure}
We obtained analogous concentration results for the average distance. In 
some pathological cases, the distribution is not Gaussian, albeit it
always turns out to be unimodal (in some cases, discarding few outliers), so
we can apply the Vysochanski{\u\i}-Petunin inequality. We will report some
relevant observations on the spid of a number of graphs after
describing our experiments.

\section{Experiments}
\label{sec:experiments}

We ran our experiments on the datasets described in Table~\ref{tab:main}:
\begin{itemize}
\item the web graphs are almost all available at \url{http://law.dsi.unimi.it/},
except for the \texttt{altavista} dataset that was provided by
Yahoo! within the Webscope program (AltaVista webpage connectivity
dataset, version 1.0,
\url{http://research.yahoo.com/Academic_Relations});\footnote{It should be
remarked by this graph, albeit widely used in the literature, is not a good
dataset. The dangling nodes are $53.74$\%---an impossibly
high value~\cite{VigSMCH}, and an almost sure indication that all nodes in the
frontier of the crawler (and not only visited nodes) were added to the graph,
and the giant component is less than 4\% of the whole graph.}
\item for the social
networks: \texttt{hollywood} (\url{http://www.imdb.com/}) is a
co-actorship graph where vertices represent actors; 
\texttt{dblp} (\url{http://www.informatik.uni-trier.de/~ley/db/}) is a
scientific collaboration network where each vertex represents a scientist and
two vertices are connected if they have worked together on an article;
in \texttt{ljournal} (\url{http://www.livejournal.com/}) nodes are
users and there is an arc from $x$ to $y$ if $x$ registered $y$ among his
friends (it is not necessary to ask $y$ permission, so the graph is
\emph{directed});
\texttt{amazon} (\url{http://www.archive.org/details/amazon_similarity_isbn/})
describes similarity among books as reported by the Amazon store; \texttt{enron}
is a partially anonymised corpus of e-mail messages
exchanged by some Enron employees (nodes represent people
and there is an arc from $x$ to $y$ whenever $y$ was the recipient of 
a message sent by $x$);
finally in \texttt{flickr} (\url{http://www.flickr.com/}\footnote{We thank
Yahoo!\ for the experimental results on the Flickr graph.})
vertices correspond to Flickr users and there is an edge connecting $x$ and $y$
whenever either vertex is recorded as a contact of the other one.
\end{itemize}
At the best of our knowledge, this is the first paper where such a wide and
diverse set of data is studied, and where features such as effective diameter or
average path length are computed on very large graphs with precise statistical
guarantees.

\begin{table}	\small
	\centering
	\begin{tabular}{|l|r|r|}
	\hline
	{\bf Graph} & \texttt{snap} & HyperANF\\
	\hline
	\texttt{amazon} & 9.5\,m & 5\,s \\
	\texttt{indochina-2004} & 4.62\,h & 1.83\,m\\
	\texttt{altavista} & - & 1.2\,h\\	
	\hline
	& HADI (90 machines) & HyperANF\\
	\hline
	\begin{minipage}{3cm}Kronecker\\(177\,K nodes, 2\,B arcs)\end{minipage} &
	30\,m & 2.25\,m\\
	\hline 
	\end{tabular}
\caption{\label{tab:speed}A comparison of the speed of \texttt{snap}/HADI vs.~HyperANF.
The tests on \texttt{snap} were performed on our hardware. Both algorithms were
stopped at a relative increment of $0.001$. The timings of HADI on the M45
cluster are the best reported in~\cite{KTAHMRLG}, and both algorithms ran
three iterations. We remark that a run of HyperANF on the Kronecker graph takes
\emph{less than fifteen minutes on a laptop}.}
\end{table}

\begin{landscape}
\begin{table*}[t]
	\centering
	\begin{tabular}{|l|l|r|r|c|c|c|c}
	{\bf Name} & {\bf Type} & {\bf Nodes} & {\bf Arcs} & \textbf{spid} $(\pm\sigma)$ & \textbf{ad} $(\pm\sigma)$ & \textbf{ied} $(\pm\sigma)$ & \textbf{ed (2)}\\
	\hline
        \texttt{amazon} & social (u) & $735\,323$  & $5\,158\,388$             	& $0.76\,(\pm0.060)$ &    $12.05\,(\pm0.206)$&     $15.50\,(\pm0.433)$	& $[14\..18]$ \\
        \texttt{dblp} & social (u) & $326\,186$  & $1\,615\,400$               	& $0.36\,(\pm0.034)$ &    $7.34\,(\pm0.114)$&     $8.96\,(\pm0.215)$	& $[8\..10]$ \\
        \texttt{enron} & social (d) & $69\,244$  & $276\,143$                  	& $0.21\,(\pm0.020)$ &    $4.24\,(\pm0.065)$&     $4.94\,(\pm0.103)$	& $[4\..6]$ \\
        \texttt{ljournal} & social (d) & $5\,363\,260$  & $79\,023\,142$       	& $0.21\,(\pm0.023)$ &    $5.99\,(\pm0.078)$&     $6.92\,(\pm0.143)$	& $[6\..8]$ \\
        \texttt{flickr} & social (u) & $526\,606$  & $47\,097\,454$            	& $0.14\,(\pm0.009)$ &    $3.50\,(\pm0.047)$&     $3.92\,(\pm0.049)$	& $[3\..5]$ \\
        \texttt{hollywood} & social (u) & $1\,139\,905$  & $113\,891\,327$     	& $0.14\,(\pm0.012)$ &    $3.87\,(\pm0.045)$&     $4.42\,(\pm0.109)$	& $[4\..5]$ \\
        \texttt{indochina-2004-hosts} & host (d) & $19\,123$  & $233\,380$     	& $0.35\,(\pm0.021)$ &    $4.26\,(\pm0.079)$&     $5.44\,(\pm0.164)$	& $[5\..7]$ \\
        \texttt{uk-2005-hosts} & host (d) & $587\,205$  & $12\,825\,465$       	& $0.30\,(\pm0.018)$ &    $5.93\,(\pm0.081)$&     $7.06\,(\pm0.151)$	& $[6\..8]$ \\
        \texttt{cnr-2000} & web (d) & $325\,557$ & $3\,216\,152$               	& $2.50\,(\pm0.086)$ &    $17.35\,(\pm0.313)$&     $25.45\,(\pm0.357)$	& $[23\..29]$ \\
        \texttt{eu-2005} & web (d) & $862\,664$  & $19\,235\,140$              	& $1.25\,(\pm0.209)$ &    $10.17\,(\pm0.363)$&     $14.31\,(\pm0.988)$	& $[13\..16]$ \\
        \texttt{in-2004} & web (d) & $1\,382\,908$  & $16\,917\,053$           	& $1.30\,(\pm0.173)$ &    $15.40\,(\pm0.374)$&     $20.74\,(\pm0.792)$	& $[20\..24]$ \\
        \texttt{indochina-2004} & web (d) & $7\,414\,866$  & $194\,109\,311$   	& $1.64\,(\pm0.134)$ &    $15.63\,(\pm0.338)$&     $21.68\,(\pm0.658)$	& $[20\..26]$ \\
        \texttt{uk@10E6} & web (d) & $100\,000$  & $3\,050\,615$               	& $1.64\,(\pm0.111)$ &    $5.97\,(\pm0.172)$&     $10.36\,(\pm0.251)$	& $[8\..12]$ \\
        \texttt{uk@10E7} & web (d) & $1\,000\,000$  & $41\,247\,159$           	& $1.76\,(\pm0.043)$ &    $8.96\,(\pm0.172)$&     $14.31\,(\pm0.341)$	& $[12\..17]$ \\
        \texttt{it-2004} & web (d) & $41\,291\,594$  & $1\,150\,725\,436$      	& $2.14\,(\pm0.149)$ &    $15.02\,(\pm0.300)$&     $19.65\,(\pm0.698)$  & $[18\..22]$ \\
        \texttt{uk-2007-05} & web (d) & $105\,896\,555$  & $3\,738\,733\,648$  	& $1.10\,(\pm0.234)$ &    $15.39\,(\pm0.418)$&     $19.93\,(\pm1.030)$  & $[18\..23]$ \\
        \texttt{altavista} & web (d) & $1\,413\,511\,390$ & $6\,636\,600\,779$ 	& $4.24\,(\pm0.764)$ &    $16.69\,(\pm0.779)$&     $23.04\,(\pm2.517)$	& $[19\..31]$ \\
	\end{tabular}
\caption{\label{tab:main}Our main data table. ``Type'' describes
whether the given graph is a web-graph, a proper social network, or the host
quotient of a web graph (u=undirected, d=directed). The graphs \texttt{uk@10E6}
and \texttt{uk@10E7} are obtained by visiting in a breadth-first fashion
\texttt{uk-2007-05} starting from a random node. They simulate smaller crawls
of a larger network. We show spid, average distance and interpolated effective diameter
\textit{a posteriori} with their empirical standard deviation, and intervals for the effective diameter
with $85$\% confidence for a comparison. % A dash is shown when the precision was
%not sufficient to obtain an upper bound on the effective diameter.
}
\end{table*}
\end{landscape}

All experiments are performed on a Linux server 
equipped with Intel Xeon X5660 CPUs ($2.80$\,GHz, $12$\,MB cache
size) for overall 24 cores and $128$\,GB of RAM; the server cost about
$8\,900$ EUR in 2010.

A brief comparison with \texttt{snap} and HADI timings is shown in
Table~\ref{tab:speed}. Essentially, on our hardware HyperANF is two orders of
magnitudes faster than \texttt{snap}. Our run on the Kronecker graph is one order of
magnitude faster than HADI's (or three orders of magnitude faster, if you take into consideration
the number of machines involved), but this comparison is unfair, as in principle
HADI can scale to arbitrarily large graphs, whereas we are limited by the amount
of memory available. Nonetheless, the speedup is clearly a breakthrough in the
analysis of large graphs. It would be interesting to compare our timings for
the \texttt{altavista} dataset with HADI's, but none have been published.

It is this speed that makes it possible, for the first time, to compute data
associated with the distance distribution with high precision and for a
large number of graphs. We have 100 runs with relative standard
deviation of $9.37$\% for all graphs, except for those on the \texttt{altavista}
dataset ($13.25$\%). All graphs are run to stabilisation.
Our computations are necessarily much longer (usually, an order of magnitude longer
in iterations) than those used to compute the effective diameter or similar measures.
This is due to the necessity of computing with high precision second-order statistics
that are used to compute the spid.

Previous publications used few graphs, mainly because of the large
computational effort that was necessary, and no data was available about the
number of runs. Moreover, we give precise confidence intervals based on
parametric statistics for data depending on the second moment, such as the
spid---something that has never done before. We gather here our findings.

\smallskip\noindent\textbf{\textit{A posteriori} parameters are highly
concentrated.} According to our experiments, computing the effective diameter,
average distance and spid on a large number of low-precision runs generates highly concentrated
distributions (see the empirical standard deviation in Table~\ref{tab:main}). Thus,
we suggest this approach for computing such values, \emph{provided that
termination is by stabilisation}.

\begin{figure}[htb]
\centering
\includegraphics[scale=.7]{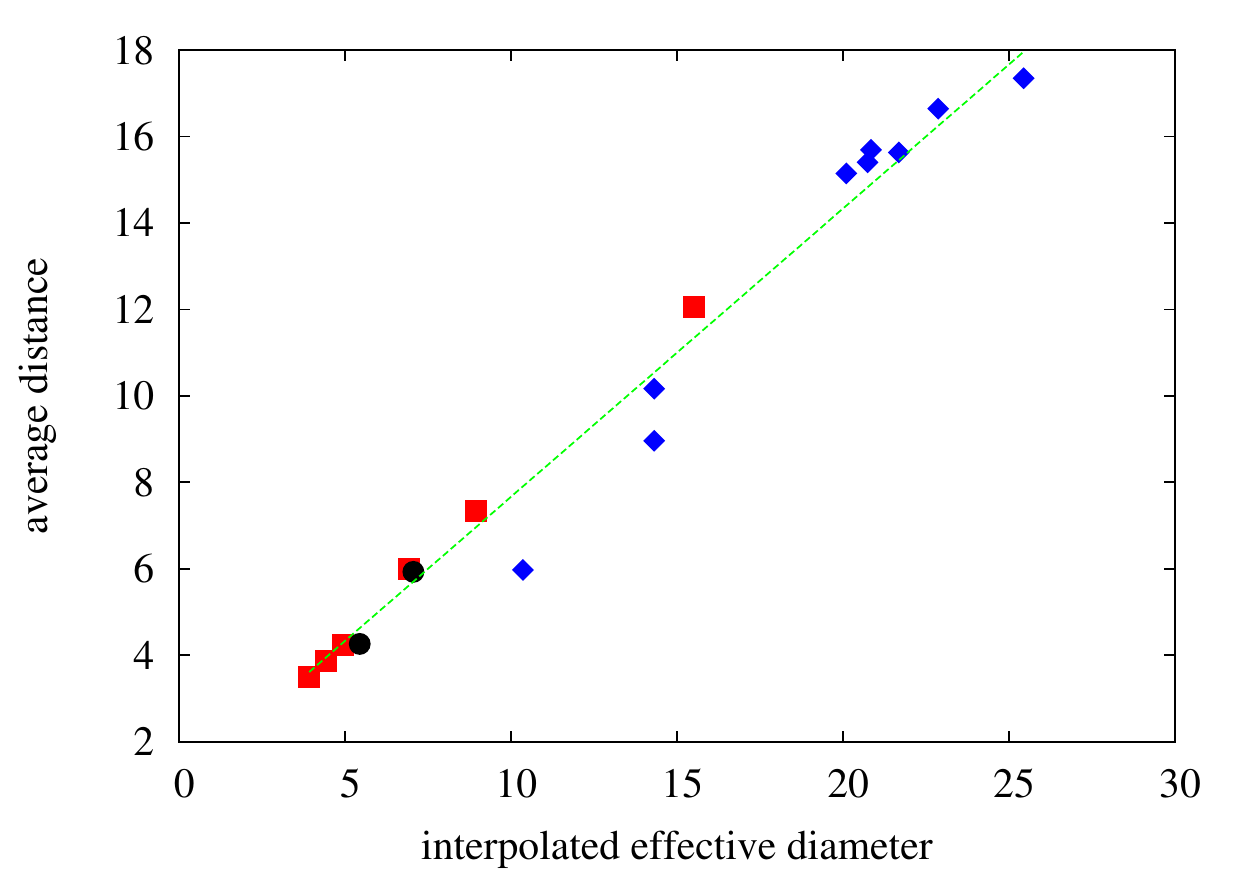}
\caption{\label{fig:diamvsavgsp} A plot showing the strong linear correlation
between the average distance and the effective diameter.}
\end{figure}

\smallskip\noindent\textbf{Effective diameter and average distance are
essentially linearly correlated.} Figure~\ref{fig:diamvsavgsp} shows a scatter plot
of the two values, and the line $2x/3+1$. The correlation between the two values
has always been folklore in the study of social networks, but we can confirm
that on both social and web networks the connection can be exactly expressed in
linear terms (it would be of course interesting to prove such a correlation
formally, under suitable restrictions on the structure of the graph). This fact
suggests that the average distance (which is more principled from a
statistic viewpoint, and parameter-free) should be used as the reference
parameter to express the closeness between nodes. Moreover, experimentally \emph{the standard deviation of the
effective diameter in a posteriori computations is usually significantly larger
than that of the average distance}.

Incidentally, the average distance of the \texttt{altavista} dataset is $16.5$---slightly more than
what reported in~\cite{KTAHMRLG}  (possibly because of termination conditions
artifacts).
% , albeit we had to read off the value from a graph.

\smallskip\noindent\textbf{It is difficult to give \textit{a priori} confidence
intervals for the effective diameter with a small number of runs.} Unless a
large number of runs is available, so that the precision of the approximation of
the neighbourhood function can be significantly lowered, it is impossible to
provide interesting upper bounds for the effective diameter.

\smallskip\noindent\textbf{The spid can tell social networks from web graphs.}
As shown in Table~\ref{tab:main}, even taking the
standard deviation into account spids are pretty much below 1 for social
networks and above 1 for web graphs; host graphs (not surprisingly) behave
like social networks. Note that this works \emph{both for directed and undirected graphs}.
Figure~\ref{fig:spidsize} shows the spid values obtained
for our datasets plotted against the graph size, and also witnesses that there
is no correlation (a similar graph, not shown here, testifies that
spid is also independent from density). Figure~\ref{fig:spidavgsp} shows that there is some
slight correlation between the spid and the average distance: nonetheless,
there is no way to tell networks from our dataset apart using the latter value,
whereas the under- or over-dispersion of the distance distribution, as defined by the
spid, never makes a mistake. Of course, we expect to enrich this graph in
time with more datasets: we are particularly interested in gathering very large
social networks to test the spid at large sizes.

We remark that, as a sanity check, we have also computed on several
web-graph datasets the spid of the \emph{giant component}, which turned out to
be very similar to the spid of the whole graph. We see this as a clear sign that
the spid \emph{is largely independent of the artifacts of the crawling process}.

\smallskip\noindent\textbf{Direction should not be destroyed when analysing a
graph.} We confirm that symmetrising graphs destroys the combinatorial structure
of the network: the average distance drops to very low values in
all cases, as well as the spid. This suggests that there is important structural
information that is being ignored. We also note that since all web snapshot we
have at hand are gathered by some kind of breadth-first visit, they represent
balls of small diameter centred at the seed: symmetrising the graph we cannot
expect to get an average distance that is larger than twice the radius of
the ball. All in all, the only advantage of symmetrising a graph is a significant reduction
in the number of iterations that are needed to complete a computation of the
neighbourhood function.\footnote{We remark that the ``diameter $7\sim8$'' claim
in~\cite{KTAHMRLG} about the \texttt{altavista} dataset
refers to the \emph{effective} diameter
for the \emph{symmetrised} version of the graph.}

\begin{figure}[htb]
\centering
\includegraphics[scale=.7]{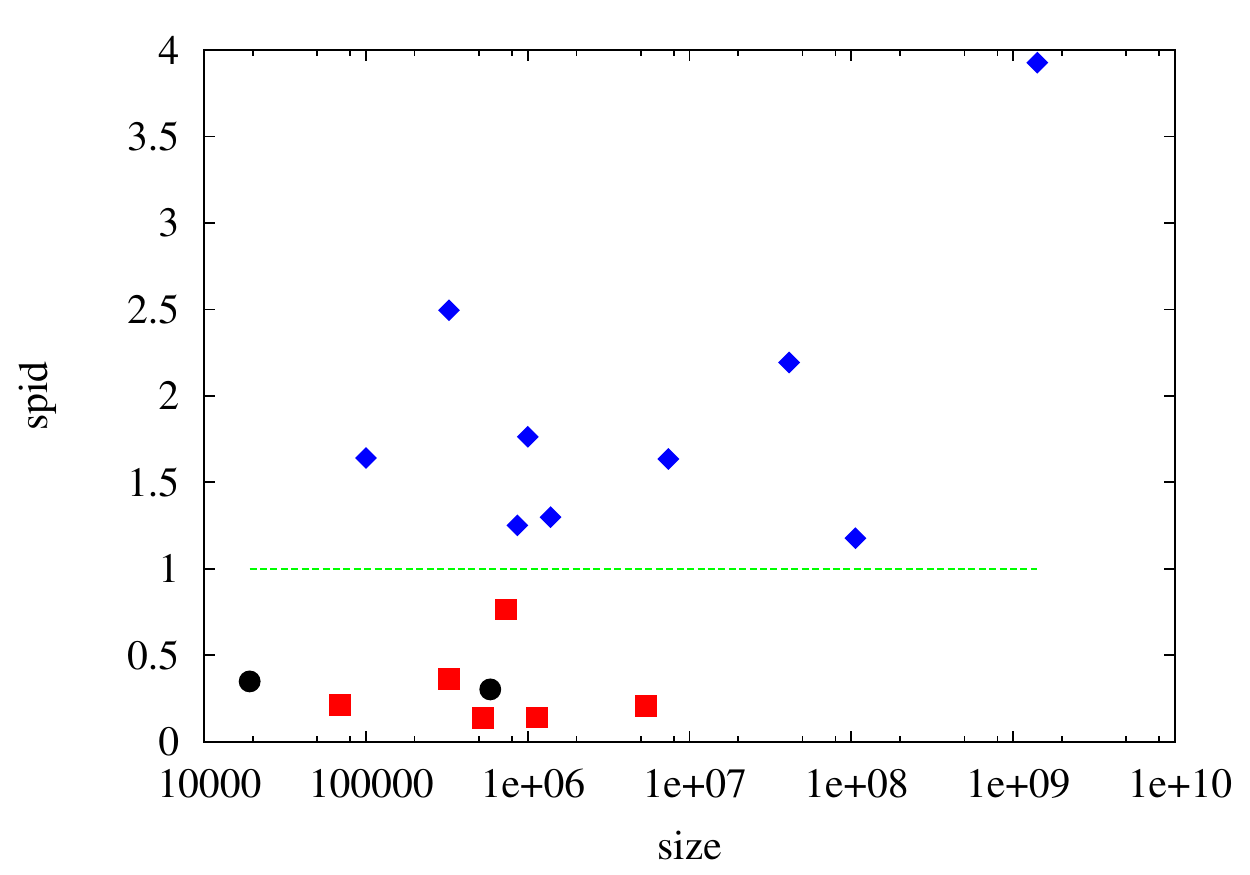}
\caption{\label{fig:spidsize} A plot showing the spid values (vertical) for our
datasets compared with their size (i.e., number of nodes, horizontal): red
squares correspond to social networks, blue diamonds to web graphs and black
circles to host graphs.}
\end{figure}

\begin{figure}[htb]
\centering
\includegraphics[scale=.7]{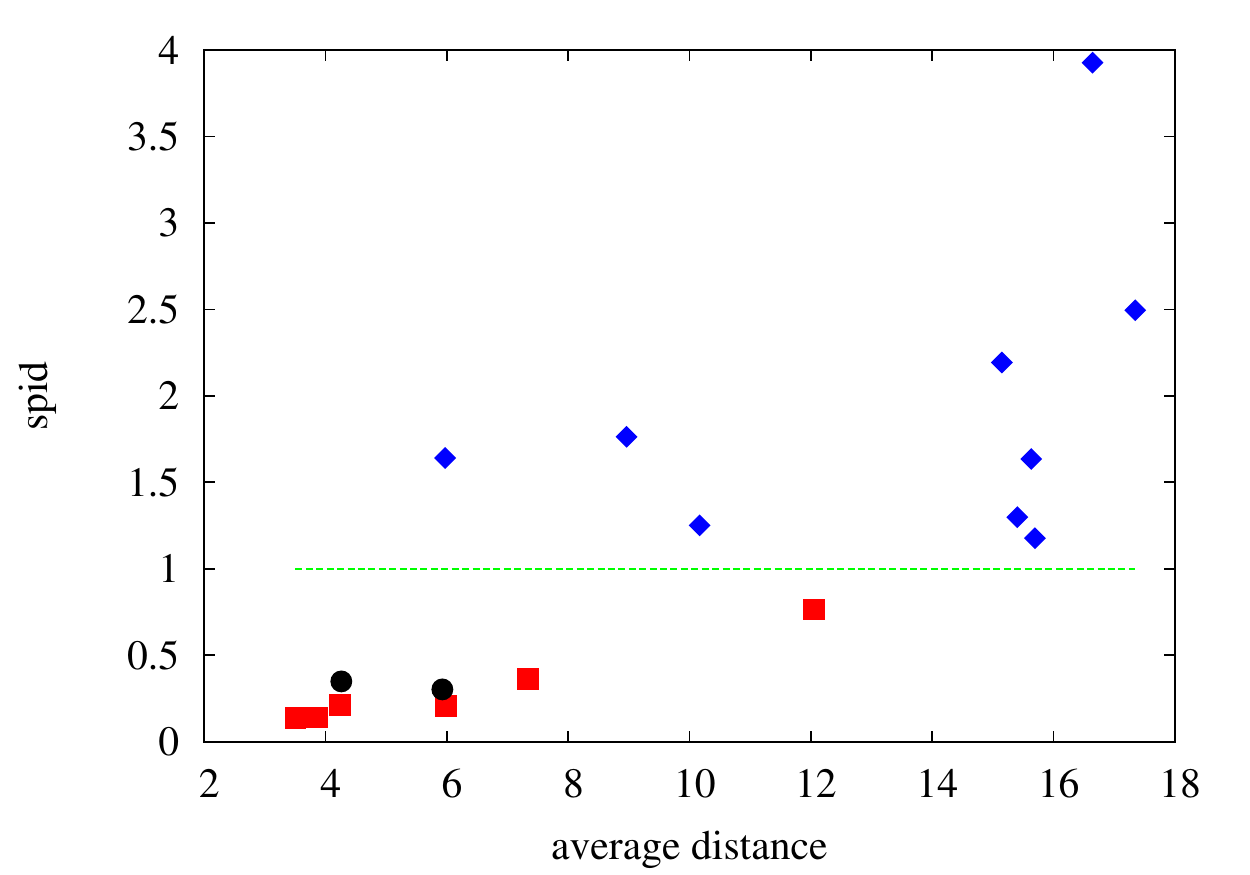}
\caption{\label{fig:spidavgsp} A plot showing the spid against the average
distance using the same conventions of Figure~\ref{fig:spidsize}.}
\end{figure}

To give a more direct idea of the level of
precision of our diameter estimation, we computed the actual diameter at $\alpha$ for the \texttt{cnr-2000} dataset,
and plotted it against the interval estimation obtained by HyperANF

\begin{figure}
\centering
\includegraphics[scale=.7]{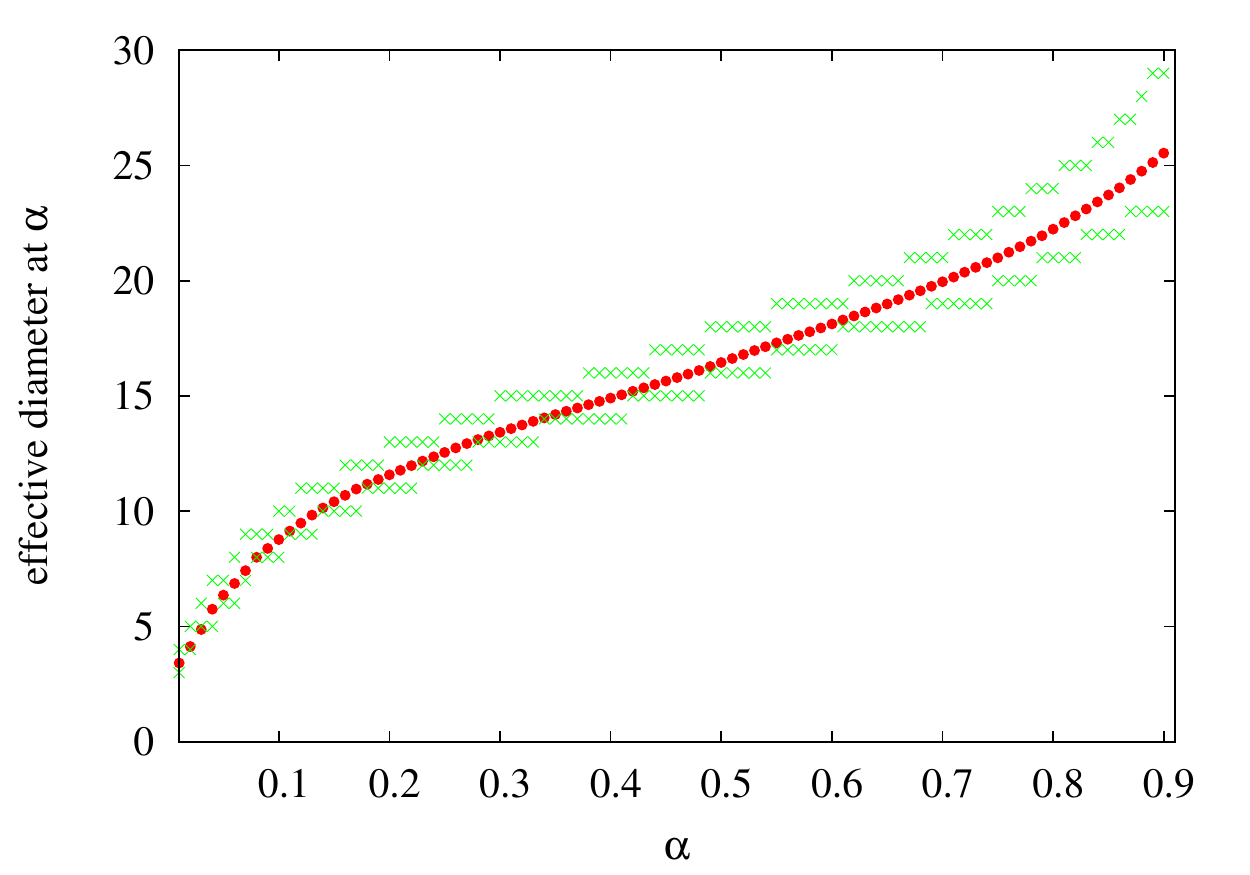}
\caption{\label{fig:diam} Effective diameters at $\alpha$ for the
\texttt{cnr-2000} dataset; red bullets show the real effective diameter, whereas green
crosses show the upper and lower extreme of the confidence interval obtained
running $100$ HyperANF with $m=128$.}
\end{figure}

% \frameit{Paolo}{Experiment how deleting homepages (or nibble-cuts) changes the
% webbiness index of web graphs (conjecture: it should reduce it largely), and
% explain this}

\section{Future work}

HyperANF lends itself naturally to distributed implementations. However,
contrarily to the approach taken by HADI \cite{KTAHMRLG}, we think that the
correct parallel framework for implementing a diffusing computation is a
synchronous parallel system where computation happens at nodes and communication
is sent from node to node with messages. Such a framework, Pregel, has been
recently developed at Google~\cite{MABP}. In a Pregel implementation of
HyperANF, every computational node sends its own counter as message to its
predecessors if it changed from the previous iteration, waits for incoming messages from its
successors, and computes the maximisation procedure on the received messages.
Due to the small size of HyperLogLog counter (exponentially smaller than the
Flajolet--Martin counters used by ANF), the amount of communication would be
very small.

Although in this
paper, we preferred to focus on the computation of the spid, we remark that
HyperANF can also be used to build the radius distribution described in~\cite{KTAHMRLG}, 
or the related closeness centrality. 

\section{Conclusions}

HyperANF is a breakthrough improvement over the original ANF techniques, mainly
because of the usage of the more powerful HyperLogLog counters combined with
their fast broadword combination and systolic computation. HyperANF can run
to stabilisation very large graphs, computing data with statistical guarantees.

% The most interesting features of such counters
% is that the \emph{precision} depends \emph{exclusively} on the number of
% registers; for instance, if a relative standard deviation of $6.50$\% is acceptable,
% one can stick with $256$ registers per counter. At that point, the only
% dependence on the graph size is the size of each register, which however is $\lceil
% \log\log n\rceil$, so it is unlikely we will ever need more than 6 bits per
% register. In practice, this feature makes HyperANF scale linearly with the
% number of nodes.

We consider, however, the introduction of the spid of a graph the main 
conceptual contribution of this paper. HyperLogLog is instrumental in making the
computation of the spid possible, as the latter requires a number of iterations
that is an order of magnitude larger than those required for an estimate of the
effective diameter.

\paragraph{Acknowledgements} Flavio Chierichetti participated to the earlier
phases of this work. We want to thank Dario Malchiodi for fruitful discussions and hints.

\bibliography{biblio}
\iffalse
\newcommand{\etalchar}[1]{$^{#1}$}
\hyphenation{ Vi-gna Sa-ba-di-ni Kath-ryn Ker-n-i-ghan Krom-mes Lar-ra-bee
  Pat-rick Port-able Post-Script Pren-tice Rich-ard Richt-er Ro-bert Sha-mos
  Spring-er The-o-dore Uz-ga-lis }

\fi
\end{document}